\documentclass[runningheads]{llncs}
\usepackage{tikz}
\usetikzlibrary{arrows.meta}
\usepackage{subcaption}
\usepackage{graphicx, amsfonts, amsmath, amssymb, caption, authblk, url, multirow, makecell, framed, float, tikz, xcolor, mathrsfs, enumerate}
\usepackage{algorithm}
\usepackage{algpseudocode}
\usepackage{comment}

\newcommand{\h}[1]{\underline{\mathbf{#1}}}
\newcommand{\ind}[1]{^{(#1)}}
\newcommand{\ssq}[1]{^{[#1]}}
\renewcommand{\cal}{\mathcal}
\newcommand{\mtt}{\mathtt}
\newcommand{\ttt}{\texttt}
\newcommand{\bb}{\mathbb}
\newcommand*{\mybox}[1]{%
  \framebox{\raisebox{0cm}[0.5\baselineskip][0.05\baselineskip]{%
    \hbox to 0.1cm{\hss#1\hss}}}}

\begin{document}

\title{0-Cyclic Equalizability of Binary Words Characterized by Hamming Weight}

\author{Sarunyu Thongjarast\inst{1}\thanks{Corresponding author}}
\authorrunning{S. Thongjarast}
\institute{Massachusetts Institute of Technology, Cambridge, MA, USA \\
\email{thong125@mit.edu}}
\maketitle

\begin{abstract}
The random cut is one of the most fundamental shuffles in card-based cryptography: it rotates a sequence of face-down cards by a secret amount. Under this shuffle, two sequences of cards are indistinguishable if and only if they are cyclic shifts of each other. This motivates the question of whether, given two sequences of cards, inserting cards at matching positions can make them indistinguishable. A previous study~\cite{thongjarast2025} shows that such an insertion is always possible when any cards may be inserted, as long as the two words are permutations of each other. This paper considers a stronger restriction: if the cards are binary, carrying only $0$ or $1$, can we insert only $0$s to make the sequences indistinguishable? We call two words \emph{$0$-cyclically equalizable} if one can insert $0$s into both sequences at matching positions so that the resulting words are cyclic shifts of each other. Our main result is that two binary words of equal length are $0$-cyclically equalizable if and only if they have equal Hamming weight, that is, the same number of $1$-bits. Since equal Hamming weight is clearly necessary, the content of the paper is to show that it is also sufficient. Our proof is constructive: we encode a pair of binary words as a single word over the four-letter alphabet $\{\mtt A, \mtt B, \mtt X, \mtt O\}$, reduce equalizability to a simpler condition in this encoding, and build the required insertion explicitly.
\end{abstract}

\section{Introduction}
Card-based cryptography studies how to carry out secure computations using only a deck of physical cards, with no computer involved. The field began with den~Boer's \emph{five-card trick}~\cite{denboer}, which lets two players compute the logical \textsc{and} of their private bits. The procedure reveals nothing more than the AND value. It was later placed on a rigorous footing by the computational model of Mizuki and Shizuya~\cite{mizuki-shizuya}. In a typical protocol, players arrange face-down cards and then apply a \emph{shuffle}: a randomized operation that scrambles the cards so that, once they are turned face up, the visible arrangement reveals the intended output but nothing about the individual inputs (unless it can be implied by the intended output).

Among these shuffles, the \emph{random cut} is the simplest and most widely used. It cyclically rotates the sequence by an amount that no participant knows. Two sequences of cards that are cyclic shifts of each other become indistinguishable. This property makes many random cut protocols secure.

This motivates a combinatorial question. Suppose two card sequences encode two situations that a protocol must keep indistinguishable. When can we pad them --- by inserting extra helping cards at the same positions in both --- so that the padded sequences are indistinguishable after a random cut? Modeling a sequence of two-colored cards as a binary word, a random cut as a cyclic shift, and the simultaneous padding of both sequences as a \emph{simultaneous insertion} of identical letters at identical positions, we say that two words are \emph{cyclically equalizable} if some simultaneous insertion makes them cyclic shifts of one another.

Because a simultaneous insertion adds the same letters to both words, it preserves the difference between their letter counts, and cyclic shifts have equal letter counts. Equal letter counts are therefore necessary for equalizability, regardless of which symbols one is allowed to insert. The interesting direction is sufficiency. Shinagawa and Nuida~\cite{cyclic} proved that for binary words this single obstruction is also sufficient: two binary words are cyclically equalizable if and only if they have equal Hamming weight. This was later generalized to arbitrary alphabets, where the characterizing invariant becomes the full vector of letter counts; equivalently, the two words must be permutations of each other~\cite{thongjarast2025}.

In all of this prior work, the inserted helping cards may be of any type. From the standpoint of an actual protocol this is a hidden cost, since each distinct symbol corresponds to a distinct kind of helping card that must be supplied. It is therefore natural to ask how restricted the inserted symbols can be. In this paper we push the restriction to its extreme and permit insertions of only one symbol, which we take to be $0$. We call two binary words \emph{$0$-cyclically equalizable} if they can be made cyclic shifts of each other by simultaneously inserting only $0$s.

Our main result is that this most economical form of padding is already as powerful as the general one.

\begin{quote} \textbf{Main Theorem} (see Theorem~\ref{main_theorem}).\itshape{} Two binary words of equal length are $0$-cyclically equalizable if and only if they have the same Hamming weight. \end{quote}

That is, a single type of helping card always suffices; one never needs the other symbol. Necessity is immediate, so the work lies in showing that equal Hamming weight lets us realize the equalization using $0$s alone, and our argument does so constructively, producing an explicit insertion.\footnote{A Python implementation of the construction is available at \url{https://github.com/SThongjarast/Binary-Equalization}.}

The proof encodes the two input words into one. To each position we assign a letter of the alphabet $\{\mtt A, \mtt B, \mtt X, \mtt O\}$ recording the pair of bits carried there: \ttt A and \ttt B mark the two ways the words can disagree (\ttt A when the bit of the first word is $0$ and the bit of the second word is $1$, and \ttt B otherwise), $\mtt X$ marks a shared $1$, and $\mtt O$ marks a shared $0$. Under this encoding, equal Hamming weight becomes the condition $|s|_{\mtt A} = |s|_{\mtt B}$; the requirement that one word be a cyclic shift of the other becomes a local \emph{shift-compatibility} condition; and inserting a $0$ into both words becomes inserting the single letter $\mtt O$. This reduction lets us argue entirely about one four-letter sequence.

We then build the insertion in stages of increasing generality. We first treat sequences over $\{\mtt A, \mtt B\}$. After rotating the sequence to begin at a \emph{valley}, an index from which every cyclic prefix has at least as many $\mtt A$s as $\mtt B$s, we pair each $\mtt A$ with a $\mtt B$ using an interleaving family of arcs, and then insert $\mtt O$s so that every arc spans the same length $d$; once the arcs are equal, the sequence is shift-compatible with offset $d$. We extend this to sequences containing $\mtt X$ by expanding each $\mtt X$ into the pair $\mtt{BA}$, running the $\{\mtt A, \mtt B\}$ construction, and contracting the pairs back. Finally, we handle sequences containing $\mtt O$ by removing the existing $\mtt O$s, equalizing the remainder, and reinserting the $\mtt O$s into the gaps the construction provides.
\bigskip \newline
\textbf{Related Work} Cyclic equalizability was introduced by Shinagawa and Nuida~\cite{cyclic}, who studied it for binary words. The extension to arbitrary alphabets was given in~\cite{thongjarast2025}. Both works focus on the \emph{Parikh vector}, the tuple of letter multiplicities of a word. Words with the same Parikh vector are called \emph{abelian equivalent}, and their study forms the area of abelian combinatorics on words~\cite{ficiPuzynina2022}. Our problem is also tied to the classical notion of \emph{conjugacy} (cyclic equivalence) of words, studied extensively in combinatorics on words~\cite{lothaire}.

\section{Preliminaries}
In this section, we recall the basic notions used throughout the paper. Our terminology follows~\cite{cyclic,thongjarast2025}, with the $\sigma$-restricted variants of simultaneous insertion and cyclic equalizability adapted to our setting.

Let $n$ be a positive integer. We denote by $\bb Z_n$ the set $\{0, 1, \dots, n-1\}$, and we identify $i \mod n$ with the unique element $j \in \bb Z_n$ such that $i \equiv j \pmod n$.

Let $\Sigma$ be a finite alphabet. A \emph{word} over $\Sigma$ is a finite sequence of letters in $\Sigma$. We denote by $\Sigma^*$ the set of all words over $\Sigma$, and by $\varepsilon$ the empty word. For a word $w \in \Sigma^*$, we denote its length by $|w|$ and its $i$-th letter by $w\ind{i}$, using zero-indexing. That is, if $|w| = n$, then $w = w\ind{0}\, w\ind{1} \cdots w\ind{n-1}$.

\begin{definition}[Number of Occurrences]
    Let $\Sigma$ be a finite alphabet, let $c \in \Sigma$, and let 
    $w \in \Sigma^*$. We define $|w|_c$ to be the number of occurrences of $c$ in $w$.
\end{definition}

\begin{definition}[Hamming Weight]
    Let $w$ be a binary word. The Hamming weight of $w$ is $|w|_1$.
\end{definition}

\begin{definition}[Cyclic Equivalence]
    Let $w_1,w_2$ be two words of equal length $n$. We say that $w_1$ and $w_2$ are cyclically equivalent if there exists an integer $d$ such that $w_1\ind i = w_2\ind{i+d \bmod n}$ for all $i = 0,1,\dots,n-1$. In that case, we say that $w_2$ is a cyclic shift of $w_1$ with offset $d$.
\end{definition}

\begin{definition}[Simultaneous Insertion]
    Let $k \ge 2$ and let $w_1, \dots, w_k \in \Sigma^n$ be words of equal length $n$ and let $\sigma$ be a subset of $\Sigma$. A \emph{$\sigma$-simultaneous insertion} is an operation that transforms each word $w_i = a_{i,0}a_{i,1}\cdots a_{i,n-1}$ into the word
\[
u_0 a_{i,0} u_1 a_{i,1} \cdots u_{n-1} a_{i,n-1} u_n,
\]
where each $u_j$ is a word in $\sigma^*$. That is, every word receives 
identical insertions at identical positions.

When $\sigma = \Sigma$, we simply call this a \emph{simultaneous insertion}.
\end{definition}

\begin{definition}[Cyclic Equalizability]
    Let $k \ge 2$ and let $w_1, \dots, w_k \in \Sigma^*$ be words of equal length and let $\sigma \subseteq \Sigma$. We say these words are \emph{$\sigma$-cyclically equalizable} if there exist words $w'_1, \dots, w'_k \in \Sigma^*$ such that
\begin{enumerate}
    \item Each $w'_i$ is obtained from $w_i$ by a $\sigma$-simultaneous insertion, and
    \item The words $w'_1, \dots, w'_k$ are cyclically equivalent.
\end{enumerate}

When $\sigma = \Sigma$, we simply say these words are \emph{cyclically equalizable}.

\end{definition}
In the following examples, bold and underlined letters indicate inserted letters.

For example, the words $1234$ and $2143$ are cyclically equalizable, since we can simultaneously insert letters to obtain
\[
1\underline{\mathbf 4}23\underline{\mathbf 2}4 \quad \text{and} \quad 2\underline{\mathbf 4}14\underline{\mathbf 2}3.
\]
The transformed words are cyclically equivalent with offset 2.

Similarly, the words $0110$ and $1010$ are $0$-cyclically equalizable, since we can simultaneously insert $0$ to obtain
\[
01\underline{\mathbf 0}10 \quad \text{and} \quad 10\underline{\mathbf 0}10.
\]
The transformed words are cyclically equivalent with offset $2$.

In contrast, one can verify that the words $012$ and $021$ are not $1$-cyclically equalizable: we cannot insert $1$ into both words at the same position to make them cyclically equivalent. However, the words are $0$-cyclically equalizable since we can simultaneously insert $0$ to obtain
\[
01\underline{\mathbf 0}2 \quad \text{and} \quad 02\underline{\mathbf 0}1.
\]
The transformed words are cyclically equivalent with offset $2$.

\section{Known Properties of Cyclic Equalizability}
In this section, we review known results on cyclic equalizability that   motivate our work.

\begin{theorem}[\cite{cyclic}]
    Let $u,v \in \{0,1\}^n$ be binary words of equal length. Then, $u$ and $v$ are cyclically equalizable if and only if $|u|_1 = |v|_1$.
\end{theorem}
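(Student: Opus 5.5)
Necessity is immediate. A simultaneous insertion adds the same letters to $u$ and $v$, so it preserves the difference $|u|_1-|v|_1$. Cyclically equivalent words have equal letter counts. Hence if $u$ and $v$ are cyclically equalizable, the inserted words have equal Hamming weight, and therefore $|u|_1=|v|_1$.

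For sufficiency, the shortest route is to obtain the statement as a corollary of the stronger Theorem~\ref{main_theorem} proved later. Every $\{0\}$-simultaneous insertion is in particular a simultaneous insertion. So if equal Hamming weight already gives $0$-cyclic equalizability, it gives cyclic equalizability a fortiori. Since that theorem is only stated later, I would also want a self-contained argument, and I would structure it exactly as outlined in the introduction, as follows.

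First, I would encode the pair $(u,v)$ as a word $s$ over $\{\mtt A,\mtt B,\mtt X,\mtt O\}$, reading positionwise $(0,1)\mapsto\mtt A$, $(1,0)\mapsto\mtt B$, $(1,1)\mapsto\mtt X$, $(0,0)\mapsto\mtt O$. Under this encoding:
\begin{enumerate}
\item equal weight becomes $|s|_{\mtt A}=|s|_{\mtt B}$;
\item inserting a $0$ into both words becomes inserting $\mtt O$ into $s$;
\item ``$v$ is a cyclic shift of $u$ with offset $d$'' becomes the local condition that the first-coordinate bit at position $i$ equals the second-coordinate bit at position $i+d \bmod n$, for every $i$.
\end{enumerate}

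Second, I would treat $s\in\{\mtt A,\mtt B\}^*$. I would rotate $s$ to start at a valley, so every prefix has at least as many $\mtt A$s as $\mtt B$s. Then I would match each $\mtt B$ to an $\mtt A$ by an interleaving (non-nested, order-preserving) matching: the $k$-th $\mtt A$ goes with the $k$-th $\mtt B$. Finally I would insert $\mtt O$s so that every matched pair has the same forward distance $d$, and all other positions map, under $i\mapsto i+d$, onto positions where the bits agree.

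Third, I would handle $\mtt X$ by expanding it to $\mtt{BA}$ and applying the previous step. Afterwards I would check that contracting each such pair back to $\mtt X$ preserves the shift condition. Fourth, I would remove all $\mtt O$s, equalize what remains, and reinsert the original $\mtt O$s into gaps of the construction where a $0$ in both words is harmless.

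The main obstacle is the second step: choosing the inserted $\mtt O$s so that all arcs have a common length $d$ while also keeping every non-arc position consistent under the shift. The arcs interleave, so lengthening one arc shifts the endpoints of others. I would first try processing arcs left to right, padding each arc to the current maximum length, and verify by induction on the number of arcs that the padding never shortens an earlier arc. If that fails, I would fall back on the general-alphabet result of~\cite{thongjarast2025}, which already gives cyclic equalizability for permutations of each other, and binary words of equal weight are exactly such permutations.
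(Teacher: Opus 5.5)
Your proposal is correct and matches how the paper treats this cited result: necessity is the letter-count invariance argument from the introduction, and sufficiency follows immediately from Theorem~\ref{main_theorem}, because a $\{0\}$-simultaneous insertion is in particular a simultaneous insertion (and the paper's proof of that theorem never uses this statement, so there is no circularity). The self-contained sketch is unnecessary, but its third step as written would fail: contracting each $\mtt{BA}$ back to $\mtt X$ shortens exactly the arcs that contain an expanded pair, so the paper equalizes toward the adjusted targets $c_k = j_k - i_k - v_k$ and forbids insertions at the induced inverted indices, rather than fully equalizing the expanded $\mtt{AB}$-sequence.
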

\begin{theorem}[\cite{thongjarast2025}]
    Let $\Sigma$ be a finite alphabet, and let $u,v \in \Sigma^n$ be words of equal length. Then, $u$ and $v$ are cyclically equalizable if and only if $|u|_c = |v|_c$ for all $c \in \Sigma$.
\end{theorem}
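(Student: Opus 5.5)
Necessity is a counting argument. A simultaneous insertion adds the same letters to $u$ and to $v$, so $|u'|_c-|v'|_c=|u|_c-|v|_c$ for every $c\in\Sigma$. Cyclically equivalent words have equal letter counts. Hence $|u|_c=|v|_c$ for all $c$. The work is sufficiency, and I plan a direct construction that exploits the freedom to insert arbitrary letters.

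\emph{Reformulation.} Fix a target length $N$ and an offset $d$. Choose positions $p_0<p_1<\dots<p_{n-1}$ in $\bb Z_N$ for the original letters; all other positions are inserted, and there $u'$ and $v'$ carry a common letter. The requirement is $u'\ind{k}=v'\ind{k+d \bmod N}$ for all $k$. Follow an orbit $k\mapsto k+d$ of $\bb Z_N$. The letter $u'\ind{k}$ is forced onto $v'\ind{k+d}$. If $k+d$ is an inserted position, then $u'\ind{k+d}=v'\ind{k+d}$, so the letter is simply carried forward. If $k+d=p_j$ is an original position, then consistency demands $v_j$ equal the carried letter, after which the carried letter becomes $u_j$. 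So within each orbit, the original positions met in orbit order $j_0,j_1,\dots$ (cyclically) must satisfy $v_{j_{t+1}}=u_{j_t}$. Conversely, if this holds, we set every inserted letter to the currently carried letter, and $v'$ is a cyclic shift of $u'$ with offset $d$. An orbit containing no original position is filled with a constant letter.

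\emph{Combinatorial step.} Let $G$ be the multigraph on $\Sigma$ with one edge $u_j\to v_{j'}$-compatible transitions: concretely, index $j$ is an edge from letter $v_j$ to letter $u_j$. Since $u$ and $v$ are permutations of each other, every vertex has in-degree equal to out-degree. So the edges decompose into closed trails, one per connected component (Euler). Each closed trail gives a cyclic sequence of indices with $v_{j_{t+1}}=u_{j_t}$. I would realize each trail as one orbit, taking $g=\gcd(d,N)$ orbits of length $N/g$. Trails may also be split into simple cycles, if that makes the next step easier.

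\emph{Main obstacle.} We must choose $N$, $d$, and for each index $j$ an orbit $r_j$ and a time $t_j$ (so $p_j\equiv r_j+t_jd \pmod N$) such that three things hold:
\begin{enumerate}
\item indices on the same trail get consecutive orbit times in the trail order,
\item distinct indices get distinct positions,
\item the positions $p_j$ are increasing in $j$, i.e.\ they occur in the same left-to-right order as the indices.
\end{enumerate}
Times between trail entries may be padded arbitrarily, since inserted positions just carry letters. I would first try making $N$ huge with $d$ close to $N/M$ for a suitable $M$. Then $t\mapsto td \bmod N$ sweeps the circle in $M$ coarse rounds with a slow drift, and by spacing the trail entries' times (choosing each $t_j$ in the round landing in the window just after $p_{j-1}$) one can place $p_0<p_1<\dots<p_{n-1}$ greedily. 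This works because each step of the drift covers the whole circle finely, so every window is hit at some time later than the previous entry of the trail. Verifying that this greedy choice is always possible, including the wrap-around of each trail back to its start, is the step I expect to require the most care. An alternative is to process one index at a time with $d=1$ locally and glue, but the orbit picture above seems cleanest.
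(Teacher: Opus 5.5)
Your necessity argument, the carried-letter reformulation along the orbits of $k\mapsto k+d$, and the Euler decomposition (index $j$ as an edge $v_j\to u_j$, giving cyclic index sequences with $v_{j_{t+1}}=u_{j_t}$) are all correct. The paper does not prove this statement; it quotes it from earlier work, and its own arc-and-linear-system machinery covers only the binary $\{0\}$-restricted case, so there is no in-paper proof to compare against.

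\textbf{The gap.} Sufficiency stops exactly where its content lies. The existence of $N$, $d$ and a placement satisfying (1)--(3) is argued only heuristically, and the greedy you describe does not work as stated. You process indices left to right and pick $t_j$ ``later than the previous entry of the trail,'' but trail order is not index order. For a trail $(5,2)$, index $2$ must be met after index $5$ along the orbit, yet a left-to-right greedy must fix the time of $2$ before $5$ has one. You also do not say how the $g$ orbits (the residue classes mod $g$) interleave in the left-to-right order, or why the wrap-around is harmless.

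\textbf{How to close it.} Decouple the two orders. Fix each index's \emph{round} in advance by its position in its trail, fix its \emph{window} by its index, and use the fact that every round meets every window.
\begin{enumerate}
\item Let $T_0,\dots,T_{m-1}$ be the closed trails. Take $N=m(n^2+1)$ and $d=mn$, so $\gcd(d,N)=m$. Orbit $r$ is $\{r+mx : x\in\mathbb{Z}_{n^2+1}\}$, moving by $x\mapsto x+n$.
\item Starting from $x=0$, times $0,\dots,n$ visit $0,n,\dots,n^2$ (round $0$). For $1\le\rho\le n-1$, the next $n$ times visit $n-\rho,2n-\rho,\dots,n^2-\rho$ (round $\rho$). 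Each round lies entirely after the previous one in time.
\item Hence each window $W_j=\{jn,\dots,jn+n-1\}$, for $0\le j\le n-1$, is hit exactly once in every round.
\item If $j$ is the $k$-th entry of $T_r$, put $p_j=r+mx_j$, where $x_j$ is the round-$k$ point of $W_j$.
\end{enumerate}
Every trail has length at most $n$, so the entries of $T_r$ are met in trail order within a single period of the orbit. The wrap-around is therefore automatic, and the closing condition $v_{j_0}=u_{j_{L-1}}$ is just the closedness of the trail. Because the windows are disjoint and ordered, $x_j$ is strictly increasing in $j$, and so is $p_j=r_j+mx_j$. With this lemma supplied your proof is complete, with $|u'|=m(n^2+1)$; without it, the sufficiency direction remains unproven.
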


\section{Reformulation of 0-Cyclic Equalizability of Two Binary Words}
In this section, we state our main result and provide a reformulation of the problem that will be used throughout.

\begin{theorem}[Main Theorem]\label{main_theorem}
    Let $w_1,w_2 \in \{0,1\}^n$ be binary words. Then, $w_1$ and $w_2$ are $0$-cyclically equalizable if and only if they have the same Hamming weight.
\end{theorem}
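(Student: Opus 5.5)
Necessity is immediate. A $\{0\}$-simultaneous insertion adds the same number of $1$s to both words, namely none. Cyclically equivalent words have equal Hamming weight. So $|w_1|_1=|w_2|_1$ is forced. The work is sufficiency, which I would prove constructively through an encoding of the pair $(w_1,w_2)$ as a single word $s$ of length $n$ over $\{\mtt A,\mtt B,\mtt X,\mtt O\}$. Position $i$ gets $\mtt A$ for the bit pair $(0,1)$, $\mtt B$ for $(1,0)$, $\mtt X$ for $(1,1)$ and $\mtt O$ for $(0,0)$. Then equal Hamming weight becomes $|s|_{\mtt A}=|s|_{\mtt B}$, and a $\{0\}$-simultaneous insertion becomes inserting $\mtt O$s into $s$.

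For the cyclic-shift condition, let $N=|s|$, let $T=\{i : s\ind i\in\{\mtt B,\mtt X\}\}$ be the positions where the first word has a $1$, and let $U=\{i: s\ind i\in\{\mtt A,\mtt X\}\}$ be those where the second word has a $1$. Since binary words are determined by their sets of $1$s, "$w_1'\ind i=w_2'\ind{i+d}$ for all $i$" is exactly the condition $U = T+d \pmod N$. I will call $s$ \emph{shift-compatible with offset $d$} in that case. The whole theorem then reduces to the following claim: every $s$ with $|s|_{\mtt A}=|s|_{\mtt B}$ can be padded with $\mtt O$s to become shift-compatible for some $d$.

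I would prove the claim in three stages.

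\textbf{Stage 1: $s\in\{\mtt A,\mtt B\}^*$.} Let $k=|s|_{\mtt A}=|s|_{\mtt B}$. By a cyclic lemma (the Raney-type argument), some rotation index is a valley: every cyclic prefix read from it has at least as many $\mtt B$s as $\mtt A$s, with the orientation chosen so that each $\mtt B$ can be matched forward to an $\mtt A$. Inserting $\mtt O$s commutes with rotation, so I may assume $s$ starts at the valley. Match the $j$-th $\mtt B$ with the $j$-th $\mtt A$. The prefix condition guarantees that each $\mtt B$ precedes its partner, so the arcs interleave in FIFO fashion. Then choose a large $d$ and target positions $x_1<\dots<x_k$ for the $\mtt B$s. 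The $\mtt A$s then sit at $x_j+d$, and all remaining positions are filled with $\mtt O$s. The merged order of the numbers $\{x_j\}\cup\{x_j+d\}$ must reproduce the original order of the letters of $s$.

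This is a realizability question for a linear order. I would solve it by processing $s$ left to right and fixing the gap before each letter greedily. Because of the FIFO matching, each constraint has the form "the gap between consecutive $\mtt B$s equals the gap between the corresponding consecutive $\mtt A$s". These constraints can be satisfied by choosing the gaps with strongly increasing magnitudes, or equivalently by solving a small system of difference constraints with positive slack.

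\textbf{Stage 2: adding $\mtt X$.} Replace each $\mtt X$ by the factor $\mtt{BA}$ and apply Stage 1 to the result. This expansion should be arranged so that the $\mtt B$ and $\mtt A$ coming from the same $\mtt X$ are not matched to each other but land at a common position modulo the shift. I would then check that contracting each such pair back into one $\mtt X$ preserves $U=T+d$. This may require $d$ and the $\mtt O$-padding to be chosen so that $x_j+d\equiv x_{j'}$ exactly for the merged pair, working modulo $N$ rather than on the line.

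\textbf{Stage 3: adding $\mtt O$.} Delete the existing $\mtt O$s, equalize the remainder by Stages 1--2, and reinsert the original $\mtt O$s into runs of inserted $\mtt O$s. An $\mtt O$ lies in neither $T$ nor $U$, so this is harmless provided every gap of the original word receives enough padding. That can be ensured by scaling all gaps.

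I expect the main obstacle to be the exact realizability in Stage 1. The pitfall is that the order of the $\mtt A$s and $\mtt B$s in the padded word must coincide with the given order, and this must hold cyclically as well, since the arcs $x_j\mapsto x_j+d$ may wrap past $N$. My first attempt would be a strong-induction greedy choice of gaps from the valley, using the valley property to show that no arc needs to wrap. If that fails, I would allow wrapping and choose $N$ afterwards to close the cycle consistently.
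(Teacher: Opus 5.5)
Your architecture (encoding, valley, first-to-first matching, expanding $\mtt X$ into a two-letter factor, deleting and reinserting $\mtt O$s) is the paper's. Your set form $U=T+d \pmod N$ is correct, and it even makes the paper's ``first condition implies the second'' theorem immediate, since $|T|=|U|$.

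The gap is that the step carrying all the weight is asserted rather than proved, and the first mechanism you offer fails. That step is showing that $\mtt O$s can be inserted so that every matched pair ends up at the same distance $d$. Take $s=\mtt{BABBAA}$, which starts at a valley in your orientation. The required order $x_1<x_1+d<x_2<x_3<x_2+d<x_3+d$ forces $x_3-x_2<d<x_2-x_1$. In terms of the gaps $g_\ell$ between consecutive letters, your constraint ``$\mtt B$-gap $=$ $\mtt A$-gap'' reads $g_0=g_2+g_3$. So neither the $\mtt B$-to-$\mtt B$ gaps nor the letter gaps can be taken increasing. Feasibility of the ``difference constraints with positive slack'' is exactly what remains to be shown. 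The paper supplies it with a linear-system argument. It takes a triangular particular solution (arc $k$ owns the fresh variable $b_{j_k-1}$). For each gap $h$ it also builds a nonnegative solution with all arc sums equal and $b_h\ge 1$, and adds a large multiple of these. Your greedy can also be repaired directly: work over $\mathbb{Q}$ with $d=1$, and put each $\mtt A$ at $x_j+d$. Put each $\mtt B$ strictly between the previous letter and the least $x_{j'}+d$ among pending $\mtt A$s whose partner is already placed. The invariant that every placed letter precedes all such pending values shows these intervals are nonempty and that each $\mtt A$ lands after its predecessor. Clearing denominators finishes.

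Stage 2 contains a genuine error. With arcs running from $\mtt B$ forward to $\mtt A$, the expansion $\mtt X\mapsto\mtt{BA}$ lets the first-to-first rule match an $\mtt X$'s two halves to each other. For example, $\mtt{XBA}$ expands to $\mtt{BABA}$, whose only valleys are $0$ and $2$, and both pair the induced $\mtt B$ with the induced $\mtt A$. After contraction that arc has length $0$, so no $d>0$ works. You need $\mtt X\mapsto\mtt{AB}$ in your orientation; the paper uses $\mtt{BA}$ because its arcs run $\mtt A\to\mtt B$. Even then, contracting a pair shortens by one every arc that spans it, so the equal lengths from Stage 1 do not survive. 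Your claim that ``contracting preserves $U=T+d$'' is therefore false as stated. The missing idea is to equalize corrected lengths: arc length minus the number $v_k$ of induced pairs spanned, i.e.\ $c_k=j_k-i_k-v_k$. At the same time no $\mtt O$ may be inserted inside an induced pair. That is why the paper's insertion theorem carries the side condition $b_t=0$ at inverted indices. Its solutions satisfy this because they only make gaps $i_k$, $j_k-1$, or a chosen non-inverted $h$ nonzero. You must also handle a valley that splits an induced pair across the end of the word. Stage 3 is fine.
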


The necessity direction is straightforward. Suppose that $w_1$ and $w_2$ are $0$-cyclically equalizable. Then we can simultaneously insert $0$ to obtain $w_1'$ and $w_2'$ so that $w_1'$ and $w_2'$ are cyclically equivalent. Because inserting $0$s does not change the Hamming weight, we have $|w_1|_1=  |w_1'|_1$ and $|w_2|_1 = |w_2'|_1$. Since $w_1'$ and $w_2'$ are cyclically equivalent, they have equal Hamming weight, and hence $|w_1|_1 = |w_2|_1$.

\subsection{Reformulating the Problem}
Our goal now is to show the sufficient condition: if $w_1,w_2$ are two binary words of equal length and Hamming weight, then they are $0$-cyclically equalizable. We will begin our proof by encoding the two words into one.
\begin{definition}[Encoding]
    Let $w_1,w_2 \in \{0,1\}^n$ be binary words. We define the 
    \emph{encoding} of $(w_1, w_2)$ as the word $s \in \{\mtt A,\mtt B,\mtt X,\mtt O\}^n$ of length $n$ where \[
    s\ind i = \begin{cases}
        \mtt A,\quad &\text{if}\ w_1\ind i = 0\ \text{and}\ w_2\ind i =1\\
        \mtt B, &\text{if}\ w_1\ind i = 1\ \text{and}\ w_2\ind i =0\\
        \mtt X, &\text{if}\ w_1\ind i = 1\ \text{and}\ w_2\ind i =1\\
        \mtt O, &\text{if}\ w_1\ind i = 0\ \text{and}\ w_2\ind i =0.
    \end{cases}
    \]
\end{definition}
\begin{definition}[\ttt{ABXO}-sequence]
    Let $s$ be a word over $\{\mtt{A}, \mtt{B}, \mtt{X}, \mtt{O}\}$. 
    We say $s$ is an \emph{\ttt{ABXO}-sequence} if $|s|_{\mtt{A}} = |s|_{\mtt{B}}$. 
    Similarly, we define \emph{\ttt{AB}-sequences}, \emph{\ttt{ABX}-sequences}, 
    and \emph{\ttt{ABO}-sequences} as words over 
    $\{\mtt{A}, \mtt{B}\}$, $\{\mtt{A}, \mtt{B}, \mtt{X}\}$, 
    and $\{\mtt{A}, \mtt{B}, \mtt{O}\}$, respectively, 
    satisfying the same condition $|s|_{\mtt{A}} = |s|_{\mtt{B}}$.
\end{definition}
\begin{proposition}\label{equalAB}
    Let $w_1, w_2 \in \{0,1\}^n$ be two binary words, and $s$ be their encoding. Then, $w_1$ and $w_2$ have the same Hamming weight if and only if $|s|_{\mtt A} = |s|_{\mtt B}$.
\end{proposition}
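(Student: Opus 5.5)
The plan is to count $1$-bits position by position. Every index $i\in\bb Z_n$ receives exactly one letter $s\ind i\in\{\mtt A,\mtt B,\mtt X,\mtt O\}$. That letter is determined by the pair $(w_1\ind i, w_2\ind i)$, according to the four cases of the encoding definition. So the index set $\{0,\dots,n-1\}$ splits into four disjoint classes: the indices carrying $\mtt A$, $\mtt B$, $\mtt X$, and $\mtt O$. Their sizes are $|s|_{\mtt A}$, $|s|_{\mtt B}$, $|s|_{\mtt X}$, and $|s|_{\mtt O}$.

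The next step is to express each Hamming weight through these class sizes. By definition, $w_1\ind i=1$ exactly when $s\ind i\in\{\mtt B,\mtt X\}$. Similarly, $w_2\ind i=1$ exactly when $s\ind i\in\{\mtt A,\mtt X\}$. Summing over the disjoint classes gives
\[
|w_1|_1 = |s|_{\mtt B} + |s|_{\mtt X}, \qquad |w_2|_1 = |s|_{\mtt A} + |s|_{\mtt X}.
\]

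Subtracting the two identities yields $|w_1|_1-|w_2|_1 = |s|_{\mtt B}-|s|_{\mtt A}$. One side vanishes exactly when the other does, and this proves both directions of the equivalence at once.

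There is no real obstacle here. The only point needing care is reading the encoding table correctly, so that $\mtt B$ is credited to $w_1$ and $\mtt A$ to $w_2$. In particular, the shared-$1$ letter $\mtt X$ must contribute to both weights, which is why it cancels in the difference.
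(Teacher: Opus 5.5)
Your proof is correct and follows the paper's argument: both write $|w_1|_1 = |s|_{\mtt B} + |s|_{\mtt X}$ and $|w_2|_1 = |s|_{\mtt A} + |s|_{\mtt X}$, then cancel the common $\mtt X$ term. Your explicit partition of the indices into the four letter classes just spells out the counting that the paper leaves implicit.
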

\begin{proof}
    Notice that \begin{align*}
    |w_1|_1 &= |s|_{\mtt B} + |s|_{\mtt X}\\
    |w_2|_1 &= |s|_{\mtt A} + |s|_{\mtt X}.
    \end{align*}
    Therefore, the Hamming weights of $w_1$ and $w_2$ are equal if and only if $|s|_{\mtt B} + |s|_{\mtt X} = |s|_{\mtt A} + |s|_{\mtt X}$, which can be simplified as $|s|_{\mtt A} = |s|_{\mtt B}$. \qed
\end{proof}
\begin{example}
Let $n = 6$ and take the words $w_1 = 011010$ and $w_2 = 110010.$ This gives $s = \mtt{AXBOXO}$. Here $|s|_{\mtt A} = |s|_{\mtt B} = 1$, matching the
equal Hamming weights $|w_1|_1 = |w_2|_1 = 3$.
\end{example}
\begin{definition}[Shift-Compatible and Shift-Equalizable]
    Let $s$ be an \ttt{ABXO}-sequence of length $n$ and let $d$ be an integer. We call $s$ \emph{shift-compatible} with offset $d$ if for all $i = 0,1,\dots, n-1$, we have
    \begin{itemize}
        \item if $s\ind{i} \in \{\mtt{A}, \mtt{X}\}$, then $s\ind{i+d \bmod n} \in \{\mtt{X}, \mtt{B}\}$, and
        \item if $s\ind{i} \in \{\mtt{B}, \mtt{O}\}$, then $s\ind{i+d \bmod n} \in \{\mtt{A}, \mtt{O}\}$.
    \end{itemize}
    Additionally, if we can obtain $s'$ by inserting \ttt O into an \ttt{ABXO}-sequence $s$ so that $s'$ is shift-compatible, we call $s$ \emph{\ttt O-shift-equalizable}.
\end{definition}
At this point, the shift-compatibility condition looks complicated. In fact, it is simpler than it looks: Theorem~\ref{sufficient shift compatibility condition} shows that the second condition follows automatically from the first.
\begin{proposition}\label{encoding-shift-equivalence}
    Let $w_1, w_2 \in \{0,1\}^n$ be two binary words, let $s$ be their encoding, and let $d$ be an integer. Then, $w_2$ is a cyclic shift of $w_1$ with offset $d$ if and only if $s$ is shift-compatible with offset $-d$.
\end{proposition}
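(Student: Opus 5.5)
We prove Proposition~\ref{encoding-shift-equivalence} by unfolding the definitions position by position. By the definition of cyclic equivalence, $w_2$ is a cyclic shift of $w_1$ with offset $d$ exactly when $w_1\ind{i} = w_2\ind{i+d \bmod n}$ for all $i$. Reindex with $j = i + d \bmod n$; as $i$ ranges over $\bb Z_n$, so does $j$, and $i = j - d \bmod n$. The condition therefore becomes
\[
w_2\ind{j} = w_1\ind{j-d \bmod n} \quad \text{for all } j \in \bb Z_n .
\]
This compares the second coordinate of the encoded letter at position $j$ with the first coordinate of the encoded letter at position $j + (-d) \bmod n$. This reindexing accounts for the sign $-d$ in the statement.

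Next we read off the encoding coordinatewise. From the definition of $s$, the second bit $w_2\ind{j}$ equals $1$ exactly when $s\ind{j} \in \{\mtt A, \mtt X\}$, and equals $0$ exactly when $s\ind{j} \in \{\mtt B, \mtt O\}$. Likewise, the first bit $w_1\ind{k}$ equals $1$ exactly when $s\ind{k} \in \{\mtt X, \mtt B\}$, and equals $0$ exactly when $s\ind{k} \in \{\mtt A, \mtt O\}$.

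Setting $k = j - d \bmod n$, the equality $w_2\ind{j} = w_1\ind{k}$ splits into two cases according to the value of $w_2\ind{j}$.
\begin{itemize}
  \item If $s\ind{j} \in \{\mtt A, \mtt X\}$, the equality means $s\ind{k} \in \{\mtt X, \mtt B\}$.
  \item If $s\ind{j} \in \{\mtt B, \mtt O\}$, the equality means $s\ind{k} \in \{\mtt A, \mtt O\}$.
\end{itemize}
These two cases are exactly the two bullets of shift-compatibility with offset $-d$, read at index $j$. Since every letter of $s$ lies in exactly one of $\{\mtt A,\mtt X\}$ and $\{\mtt B,\mtt O\}$, the two implications together are equivalent to the bit equality at each $j$. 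Hence the conditions agree for every $j$, and both directions of the equivalence follow at once.

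Two minor points need care. First, $s$ need not satisfy $|s|_{\mtt A}=|s|_{\mtt B}$ a priori, whereas shift-compatibility was defined for \ttt{ABXO}-sequences. We will either note that the defining condition makes sense for any word over the four letters, or observe that either side of the equivalence forces equal Hamming weight, and hence $|s|_{\mtt A}=|s|_{\mtt B}$ by Proposition~\ref{equalAB}. Second, the reduction of indices modulo $n$ must be handled correctly, in particular that $-d \bmod n$ is well defined. Neither point is a real obstacle; the entire content of the argument is the bookkeeping of which coordinate sits on which side of the shift.
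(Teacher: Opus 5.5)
Your proof is correct and follows the paper's argument. Both identify $\{\mtt A,\mtt X\}$ with $w_2=1$ and $\{\mtt X,\mtt B\}$ with $w_1=1$ (dually $\{\mtt B,\mtt O\}$ and $\{\mtt A,\mtt O\}$ with the $0$ bits) and compare position $j$ with position $j-d \bmod n$; the paper writes the two directions separately where you combine them into one per-index equivalence, and your remark that shift-compatibility is formally defined only for \ttt{ABXO}-sequences is a point the paper passes over, which either of your proposed fixes settles.
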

\begin{proof}
    Suppose that $w_2$ is a cyclic shift of $w_1$ with offset $d$. For each $i$, if $s\ind i \in \{\mtt A, \mtt X\}$, then $w_2\ind i = 1$. Thus, $w_1\ind {i-d \bmod n} = 1$, implying that $s \ind {i-d \bmod n} \in \{\mtt X, \mtt B\}$. Similarly, if $s\ind i \in \{\mtt B, \mtt O\}$, then $w_2\ind i = 0$. Thus, $w_1\ind {i-d \bmod n} = 0$, implying that $s \ind {i-d \bmod n} \in \{\mtt A, \mtt O\}$. By definition, $s$ is shift-compatible with offset $-d$. The forward direction is proven. 

    Suppose that $s$ is shift-compatible with offset $-d$. For each $i$, if $w_2\ind i = 0$, then $s\ind i \in \{\mtt B, \mtt O\}$, so $s \ind{i-d\bmod n} \in \{\mtt A, \mtt O\}$ and $w_1\ind{i-d \bmod n} = 0$. Similarly, if $w_2 \ind i = 1$, then $s\ind i \in \{\mtt A, \mtt X\}$, so $s \ind {i-d \bmod n} \in \{\mtt X, \mtt B\}$ and $w_1\ind{i-d \bmod n} = 1$. This implies that $w_1$ is a cyclic shift of $w_2$ with offset $-d$, which is equivalent to saying that $w_2$ is a cyclic shift of $w_1$ with offset $d$. The backward direction is proven. \qed
\end{proof}

\begin{proposition}\label{problemreformulation}
    Let $w_1,w_2 \in \{0,1\}^n$ be two binary words of equal Hamming weight and let $s$ be their encoding. Then, $w_1$ and $w_2$ are $0$-cyclically equalizable if and only if $s$ is \ttt O-shift-equalizable.
\end{proposition}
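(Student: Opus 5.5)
The plan is to show that the encoding commutes with simultaneous $0$-insertion, and then apply Proposition~\ref{encoding-shift-equivalence} to the padded words. The key observation is that a $\{0\}$-simultaneous insertion into $(w_1,w_2)$ places a $0$ in both words at the same position. At that position the pair of bits is $(0,0)$, which the encoding records as $\mtt O$. Inserting $\mtt O$ into $s$ at a given position therefore corresponds exactly to a simultaneous $0$-insertion at that position.

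\textbf{Correspondence lemma.} First I will make this precise. Suppose $w_1', w_2'$ are obtained from $w_1, w_2$ by the $\{0\}$-simultaneous insertion with blocks $u_0,\dots,u_n \in \{0\}^*$. Let $s'$ be the word obtained from $s$ by inserting $\mtt O^{|u_j|}$ in the same slots, giving
\[
s' = \mtt O^{|u_0|}\, s\ind0\, \mtt O^{|u_1|} \cdots s\ind{n-1}\, \mtt O^{|u_n|}.
\]
Then $s'$ is the encoding of $(w_1',w_2')$. This follows by comparing the two words letter by letter.

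Conversely, any insertion of $\mtt O$s into $s$ has this form for some choice of $u_j$. Applying the corresponding $0$-insertion to $w_1,w_2$ produces a pair whose encoding is that $s'$.

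Inserting $\mtt O$ does not change $|s|_{\mtt A}$ or $|s|_{\mtt B}$. So $s'$ is still an \ttt{ABXO}-sequence, and shift-compatibility of $s'$ is meaningful.

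\textbf{Forward direction.} Assume $w_1,w_2$ are $0$-cyclically equalizable. Then there is a $\{0\}$-simultaneous insertion producing cyclically equivalent $w_1',w_2'$ of common length $m$. Cyclic equivalence is symmetric, so $w_2'$ is a cyclic shift of $w_1'$ with some offset $d$.

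By the correspondence lemma, the encoding $s'$ of $(w_1',w_2')$ arises from $s$ by inserting $\mtt O$s. By Proposition~\ref{encoding-shift-equivalence}, $s'$ is shift-compatible with offset $-d$. Hence $s$ is \ttt O-shift-equalizable.

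\textbf{Backward direction.} Assume $s'$ is obtained from $s$ by inserting $\mtt O$s and is shift-compatible with some offset $e$. Read off the insertion blocks and apply the matching $0$-insertion to $w_1,w_2$. This gives $w_1',w_2'$ whose encoding is $s'$.

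Proposition~\ref{encoding-shift-equivalence}, applied with $d=-e$, then says that $w_2'$ is a cyclic shift of $w_1'$. Hence $w_1,w_2$ are $0$-cyclically equalizable.

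\textbf{Main obstacle.} The only real care needed is the bookkeeping in the correspondence lemma: the positions of inserted $\mtt O$s in $s'$ must match the simultaneous insertion slots exactly. A minor additional point is that offsets are taken modulo the new length $m$. There is no substantive difficulty beyond this.
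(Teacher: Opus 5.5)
Your proposal is correct and follows the same route as the paper. The paper also observes that a simultaneous $0$-insertion corresponds exactly to inserting $\mtt O$ at the same position in the encoding, then invokes Proposition~\ref{encoding-shift-equivalence}; you simply spell out the slot-by-slot correspondence, the preservation of $|s|_{\mtt A}=|s|_{\mtt B}$, and the offset sign bookkeeping that the paper leaves implicit.
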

\begin{proof}
    Inserting $0$ simultaneously into $w_1$ and $w_2$ at the same 
    position corresponds exactly to inserting $\mtt{O}$ at that 
    position in $s$, since $0$ encodes as $\mtt{O}$ in both words. 
    The result then follows from Proposition \ref{encoding-shift-equivalence}. \qed
\end{proof}
\begin{example}
Let $w_1 = 011010$ and $w_2 = 110100$. Note that $w_2$ is a cyclic shift of
$w_1$ with offset $5$. The encoding gives $s = \mtt{AXBABO}$. We check that $s$ is shift-compatible with offset $1$:
\[
\begin{array}{c|cccccc}
i & 0 & 1 & 2 & 3 & 4 & 5\\
\hline
s\ind{i} & \mtt A & \mtt X & \mtt B & \mtt A & \mtt B & \mtt O\\
s\ind{i+1 \bmod 6} & \mtt X & \mtt B & \mtt A & \mtt B & \mtt O & \mtt A
\end{array}
\]
Each column obeys the rule:
$\mtt A \to \mtt X$, $\mtt X \to \mtt B$, $\mtt B \to \mtt A$,
$\mtt A \to \mtt B$, $\mtt B \to \mtt O$, and $\mtt O \to \mtt A$.
\end{example}
With Propositions~\ref{equalAB} and~\ref{problemreformulation}, 
we can shift our focus from binary words to $\mathtt{ABXO}$-sequences, as shown in the following theorem, which is equivalent to the sufficiency direction of Theorem~\ref{main_theorem}.
\begin{theorem}[Reformulated Main Theorem]
    Let $s$ be an \ttt{ABXO}-sequence. Then, $s$ is \ttt O-shift-equalizable.
\end{theorem}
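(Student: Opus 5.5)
The strategy is to build the required insertion explicitly, in three stages of increasing generality, as outlined in the introduction. A preliminary reduction comes first. If $s$ satisfies the first condition of shift-compatibility with offset $d$, then the map $i \mapsto i+d \bmod n$ is a bijection. It sends the $|s|_{\mtt A}+|s|_{\mtt X}$ positions in $\{\mtt A,\mtt X\}$ into the $|s|_{\mtt X}+|s|_{\mtt B}$ positions in $\{\mtt X,\mtt B\}$. Since $|s|_{\mtt A}=|s|_{\mtt B}$, these two sets have the same size, so the map is onto them. Its complement therefore maps $\{\mtt B,\mtt O\}$ onto $\{\mtt A,\mtt O\}$, which is the second condition. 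So throughout I only need to arrange that every $\mtt A$ or $\mtt X$ lands, after shifting by $d$, on an $\mtt X$ or $\mtt B$.

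\textbf{Stage 1: \ttt{AB}-sequences.} Because the target property is cyclic, I may rotate $s$ freely: rotating before inserting and rotating back afterward preserves both shift-compatibility and the set of allowed insertions. I rotate so that $s$ starts at a valley, meaning every prefix has at least as many $\mtt A$s as $\mtt B$s; such a rotation exists by the cycle lemma, taking the index where the prefix difference $|\cdot|_{\mtt A}-|\cdot|_{\mtt B}$ is minimal. I then match the $k$-th $\mtt A$ with the $k$-th $\mtt B$ for each $k$. The valley property guarantees that each $\mtt A$ precedes its $\mtt B$, and the arcs interleave in a monotone (non-crossing-order-preserving) way. Next I insert $\mtt O$s so that every arc has the same length $d$. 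I choose $d$ larger than all original arc lengths and process the arcs left to right, padding just before each $\mtt B$ until its arc reaches length $d$. The delicate point is that padding inside one arc also lengthens the later arcs that contain that gap. Because the $\mtt A$'s and $\mtt B$'s occur in the same order, the $k$-th $\mtt B$ comes after every earlier $\mtt B$, so padding before it never affects earlier arcs. I expect this monotonicity to make the greedy padding consistent, with each arc's length nondecreasing as later padding is added. The result is that $\mtt A$ at position $i$ maps to $\mtt B$ at $i+d$.

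\textbf{Stage 2: adding $\mtt X$.} I replace each $\mtt X$ by $\mtt{BA}$ and apply Stage 1, forbidding insertions between the two letters of each such pair (or removing them afterward). The aim is that, when the pair is contracted back to a single $\mtt X$, the arc ending at the $\mtt B$ and the arc starting at the $\mtt A$ merge, giving the chain $\mtt A\to\mtt X\to\mtt B$ with the same offset $d$.

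\textbf{Stage 3: adding $\mtt O$.} I delete the existing $\mtt O$s, equalize the remaining \ttt{ABX}-sequence, and reinsert each original $\mtt O$ into one of the inserted $\mtt O$ blocks at the correct relative position. This works because the padding gaps can be made as large as desired. Since an $\mtt O$ imposes no constraint as a source, and an $\mtt A$-or-$\mtt X$ source that lands on an $\mtt O$ never occurs once $d$ is computed after placement, I expect I can simply enlarge $d$ and redo the padding with the original $\mtt O$s treated as extra filler.

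\textbf{Main obstacle.} I expect the hardest step to be the consistency of contraction in Stage 2. Collapsing $\mtt{BA}$ to $\mtt X$ shortens positions by one, which breaks the equal-length property. I would first handle this by working with the uncontracted length and absorbing the deleted slot as a shift of the arcs to the right. If that fails, I would equalize so that arcs through $\mtt X$ are measured directly as $\mtt A\to\mtt X$ and $\mtt X\to\mtt B$ steps of length $d$, treating each $\mtt X$ as both an endpoint and a start in the valley argument.
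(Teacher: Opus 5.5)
Your preliminary reduction is correct, and it is slicker than the paper's chain argument: a bijection of $\bb Z_n$ sending the $\{\mtt A,\mtt X\}$-positions into the equally many $\{\mtt X,\mtt B\}$-positions must send complement onto complement. The gap is in Stage 1, which carries the whole proof. The rule ``pad just before each $\mtt B$ until its arc has length $d$'' fails. Your monotonicity remark only shows that padding for arc $k$ does not disturb \emph{earlier} arcs. The real danger is that the padding lands inside a \emph{later} overlapping arc and overshoots it, and since insertions can never shorten an arc, this is fatal. For $s=\mtt{AABABB}$ ($0$ is a valley) the arcs are $r_6(0,2)$, $r_6(1,4)$, $r_6(3,5)$, of lengths $2,3,2$. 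Your rule inserts $d-2$ copies of $\mtt O$ between indices $1$ and $2$, which lies inside $r_6(1,4)$, so that arc has length $3+(d-2)=d+1$ for every $d$. The valid equalization $\mtt{AOABABOB}$ (offset $3$) instead pads right after the first $\mtt A$, at the start of an arc, where the later arc is not reached. What you need is a genuine non-negative integer solution of $c_k+\sum_{l=i_k}^{j_k-1}b_l=d$ for all $k$, with a common $d$. The paper first takes a signed particular solution with $d=0$, solving for the new variable $b_{j_k-1}$ arc by arc. It then adds a large multiple of non-negative homogeneous solutions $b^{(h)}$ with $Ab^{(h)}=\varepsilon_h\mathbf 1$ and $b^{(h)}_h\ge 1$. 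Each $b^{(h)}$ is built from entries $b_{i_k}$ on arcs before the one containing $h$ and $b_{j_k-1}$ on arcs after it, and all of these vanish at inverted indices.

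Stages 2 and 3 rest on this and are themselves unfinished. In Stage 2, equalizing the expanded arcs is the wrong target, because contraction shortens each arc by the number $v_k$ of induced $\mtt{BA}$ pairs it contains. The paper runs the Stage-1 solver with $c_k=j_k-i_k-v_k$ and zero padding at induced inverted indices. So you need Stage 1 for arbitrary $c_k$, not just $c_k=j_k-i_k$; your fallback of measuring $\mtt A\to\mtt X\to\mtt B$ steps directly is this idea. You must also handle the one induced pair that the valley rotation may split across the two ends of the word. In Stage 3, reinserting the original $\mtt O$s into existing padding blocks fails at gaps that received no padding. Treating them instead as filler imposes lower bounds $b_l\ge o_l$ that a greedy cannot guarantee. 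The paper avoids revisiting the construction altogether. After equalizing $s_r$, it inserts $\ell_{\mtt O}$ copies of $\mtt O$ (the length of the longest original run) after every letter. This keeps shift-compatibility, now with offset $(\ell_{\mtt O}+1)d$, and leaves room in every gap for the original runs.
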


\section{Construction of Insertion for \ttt{AB}-Sequences}
In this section, we prove that every \ttt{AB}-sequence is 
\ttt{O}-shift-equalizable. That is, we handle the case where the initial sequence contains only \ttt As and \ttt Bs.

\subsection{Intuition and Construction Overview}
Let $s$ be an \ttt{AB}-sequence. Suppose we obtain $s'$ (an \ttt{ABO}-sequence) of length $n'$ by inserting \ttt O into $s$ so that $s'$ is shift-compatible with offset $d$. Because there are no \ttt Xs in $s'$, if ${s'}\ind i = \mtt A$, it must follow that ${s'}\ind {i+d \bmod n'} = \mtt B$. This establishes a bijection between \ttt As and \ttt Bs in $s'$, and hence in $s$.

Therefore, our proof has two main steps:
\begin{enumerate}
    \item Construct a bijection between \ttt As and \ttt Bs in $s$.
    \item Insert $\mtt{O}$s into $s$ so that each corresponding $\mtt{A}$-$\mtt{B}$ pair in $s'$ is separated by the same offset $d$.
\end{enumerate}

\begin{figure}[h!]
\centering
\begin{subfigure}{\textwidth}
\centering
\begin{tikzpicture}[
    letter/.style={draw, circle, minimum size=8mm, inner sep=0pt,
                   font=\ttfamily},
    idx/.style={font=\scriptsize, text=gray},
    arc/.style={-{Stealth[length=2mm]}, thick}
  ]
  \foreach \i/\c in {0/A,1/O,2/A,3/B,4/A,5/B,6/O,7/B}
    \node[letter] (n\i) at (\i*0.95,0) {\c};
  \foreach \i in {0,...,7}
    \node[idx] at (\i*0.95,-0.5) {\i};
  \draw[arc] (n0) to[bend left=50] (n3);
  \draw[arc] (n2) to[bend left=50] (n5);
  \draw[arc] (n4) to[bend left=50] (n7);
\end{tikzpicture}
\caption{After inserting two $\mtt O$s we obtain $s' = \mtt{AOABABOB}$. Every
$\mtt A$ maps to a $\mtt B$ exactly $3$ positions to its right, so $s'$ is
shift-compatible with offset $d = 3$.}
\label{fig:arcs-inserted}
\end{subfigure}

\vspace{1.5em}

\begin{subfigure}{\textwidth}
\centering
\begin{tikzpicture}[
    letter/.style={draw, circle, minimum size=8mm, inner sep=0pt,
                   font=\ttfamily},
    idx/.style={font=\scriptsize, text=gray},
    arc/.style={-{Stealth[length=2mm]}, thick}
  ]
  \foreach \i/\c in {0/A,1/A,2/B,3/A,4/B,5/B}
    \node[letter] (m\i) at (\i*0.95,0) {\c};
  \foreach \i in {0,...,5}
    \node[idx] at (\i*0.95,-0.5) {\i};
  \draw[arc] (m0) to[bend left=50] (m2);
  \draw[arc] (m1) to[bend left=50] (m4);
  \draw[arc] (m3) to[bend left=50] (m5);
\end{tikzpicture}
\caption{The original sequence $s = \mtt{AABABB}$ carries the same
$\mtt A$-$\mtt B$ bijection. The arcs are now of unequal length; inserting
$\mtt O$s stretches each one to the common offset.}
\label{fig:arcs-original}
\end{subfigure}

\caption{Equalizing $s = \mtt{AABABB}$. The bijection between $\mtt A$s and
$\mtt B$s is fixed first~(b); inserting $\mtt O$s then stretches every arc to
the common offset $d = 3$~(a).}
\label{fig:arcs}
\end{figure}
\subsection{Cyclic Shift of the Initial Sequence}
Note that if $s_1$ and $s_2$ are cyclic shifts of each other, then $s_1$ is $\mtt{O}$-shift-equalizable if and only if $s_2$ is. Therefore, without loss of generality, we may replace $s$ by any of its cyclic shifts. In this section, we identify a cyclic shift of $s$ that will be convenient for our construction.

\begin{definition}[Valley]
    Let $s$ be an \ttt{AB}-sequence of length $n$. An index $a \in \mathbb{Z}_n$ is called a \emph{valley} if for every $k = 0,1, 2, \dots, n-1$, the substring \[
        s\ind{a}s\ind{a+1 \bmod n}\dots s\ind{a+k-1 \bmod n}
    \]
    contains at least as many $\mtt{A}$s as $\mtt{B}$s.
\end{definition}
\begin{lemma}\label{Existence of Valley}
    Every \ttt{AB}-sequence has at least one valley.
\end{lemma}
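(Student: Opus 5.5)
We prove the lemma with the standard cycle-lemma argument based on a prefix-sum ("height") function. Let $s$ be an \ttt{AB}-sequence of length $n$. For $k = 0,1,\dots,n$ define
\[
h(k) = \bigl|s\ind{0}\cdots s\ind{k-1}\bigr|_{\mtt A} - \bigl|s\ind{0}\cdots s\ind{k-1}\bigr|_{\mtt B},
\]
so that $h(0) = 0$. Because $|s|_{\mtt A} = |s|_{\mtt B}$, we also have $h(n) = 0$. We extend $h$ to all integers by $h(k+n) = h(k)$; this is consistent precisely because $h(n) = h(0)$. With this extension, for any $a \in \mathbb Z_n$ and any $k \ge 0$, the cyclic substring $s\ind{a}s\ind{a+1 \bmod n}\cdots s\ind{a+k-1 \bmod n}$ has (number of $\mtt A$s) minus (number of $\mtt B$s) equal to $h(a+k) - h(a)$. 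For $a + k \le n$ this is immediate from the definition. When $a + k > n$, split the substring at the wrap-around point and use the periodicity of $h$.

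Choose $a \in \mathbb Z_n$ minimizing $h(a)$ over $\{0,1,\dots,n-1\}$. Since $h$ is $n$-periodic, $h(a)$ is the global minimum of $h$ over all integers. Hence for every $k = 0,1,\dots,n-1$ we get $h(a+k) - h(a) \ge 0$. By the previous paragraph, this says every cyclic prefix starting at $a$ contains at least as many $\mtt A$s as $\mtt B$s. So $a$ is a valley.

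The only step that needs care is the wrap-around identity $h(a+k)-h(a)$ for cyclic substrings. It reduces to $h(n) = 0$, which is exactly the defining condition of an \ttt{AB}-sequence. The degenerate case of the empty sequence ($n=0$) has nothing to check if allowed. For $n \ge 1$ the minimum over $\{0,\dots,n-1\}$ exists, so no further obstacle arises.
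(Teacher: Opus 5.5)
Your proof is correct and follows essentially the same route as the paper: choose an index minimizing the prefix-sum function and use $h(n)=0$ to handle the wrap-around. The only difference is presentation: your $n$-periodic extension of $h$ folds into a single identity the paper's two explicit cases, $a+k<n$ and $a+k\ge n$.
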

\begin{proof}
    Let $s$ be an \ttt{AB}-sequence of length $n$. Define $s\ssq{a:b}$ as \[s\ind{a}s\ind{a+1 \bmod n}\dots s\ind{b-1 \bmod n}.\] Define $f(i) = |s\ssq{0:i}|_{\mtt{A}} - |s\ssq{0:i}|_{\mtt{B}}$. Let $a$ be an index 
    where $f$ achieves its minimum. We claim that $a$ is a valley. That is, $|s\ssq{a:a+k}|_{\mtt A} - |s\ssq{a:a+k}|_{\mtt B} \geq 0$ for all $k \in \bb Z_n$.

    If $a + k < n$, then \[|s\ssq{a:a+k}|_{\mtt A} - |s\ssq{a:a+k}|_{\mtt B} = f(a+k) - f(a),\] which is non-negative because $f(a)  =\min_{a' \in \bb Z_n} f(a')$.
    
    If $a + k \ge n$, then since $f(n) = 0$ (as $s$ is an \ttt{AB}-sequence),
    \[
    |s\ssq{a:a+k}|_{\mtt{A}} - |s\ssq{a:a+k}|_{\mtt{B}} 
    = (f(n) - f(a)) + f(a+k-n) = f(a+k-n) - f(a),
    \]
    which is non-negative for the same reason.

    Therefore, $a$ is indeed a valley. \qed
\end{proof}
\begin{example}
Consider the \ttt{AB}-sequence $s = \mtt{ABABBABA}$ of length $8$. Reading each
$\mtt A$ as a step of $+1$ and each $\mtt B$ as a step of $-1$, the prefix-sum
function $f(i) = |s\ssq{0:i}|_{\mtt A} - |s\ssq{0:i}|_{\mtt B}$ takes the values
\[
\begin{array}{c|ccccccccc}
i    & 0 & 1 & 2 & 3 & 4 & 5 & 6 & 7 & 8\\
\hline
f(i) & 0 & 1 & 0 & 1 & 0 & -1 & 0 & -1 & 0
\end{array}
\]
shown below as a path that rises on each $\mtt A$ and falls on each $\mtt B$. The
minimum value $-1$ is attained at $i = 5$ and $i = 7$. Therefore, both indices $5$ and $7$ are valleys.

\begin{center}
\begin{tikzpicture}[x=1.15cm, y=0.85cm]
  \draw[gray!45, dashed] (0,0) -- (8,0);
  \foreach \y in {-1,0,1}
    \node[font=\scriptsize, text=gray, left] at (-0.1,\y) {$\y$};
  \foreach \i in {0,...,8}
    \node[font=\scriptsize, text=gray] at (\i,-1.9) {\i};
  \draw[teal!70!black, very thick] (0,0)--(1,1);
  \draw[red!70!black,  very thick] (1,1)--(2,0);
  \draw[teal!70!black, very thick] (2,0)--(3,1);
  \draw[red!70!black,  very thick] (3,1)--(4,0);
  \draw[red!70!black,  very thick] (4,0)--(5,-1);
  \draw[teal!70!black, very thick] (5,-1)--(6,0);
  \draw[red!70!black,  very thick] (6,0)--(7,-1);
  \draw[teal!70!black, very thick] (7,-1)--(8,0);
  \foreach \i/\h in {0/0,1/1,2/0,3/1,4/0,5/-1,6/0,7/-1,8/0}
    \fill (\i,\h) circle (1.6pt);
  \foreach \i/\c in {0/A,1/B,2/A,3/B,4/B,5/A,6/B,7/A}
    \node[font=\ttfamily\small] at (\i+0.5,1.7) {\c};
  \draw[thick] (5,-1) circle (4pt);
  \draw[thick] (7,-1) circle (4pt);
  \node[font=\scriptsize, below] at (5,-1.2) {valley};
  \node[font=\scriptsize, below] at (7,-1.2) {valley};
\end{tikzpicture}
\end{center}
\end{example}

\subsection{Bijection Construction}
In this section, we construct a bijection between $\mtt{A}$s and $\mtt{B}$s in $s$. To motivate the construction, consider arranging the letters of $s'$ on a circle. Since $s'$ is shift-compatible, each $\mtt{A}$ is paired with a $\mtt{B}$ at a fixed offset $d$, and we can draw a chord connecting each such pair. One can observe that these chords cross each other in an interleaving manner. The same crossing structure must therefore appear in $s$ before any $\mtt{O}$s are inserted. 

First, we formalize what a chord is in the following definition.

\begin{definition}[Arcs]
    Let $n$ be a positive integer and let $i, j \in \mathbb{Z}_n$. The \textbf{arc} from $i$ to $j$, denoted $r_n(i, j)$, is the set of indices
    \[
    r_n(i,j) = \{i \bmod n,\ i+1 \bmod n,\ \dots,\ j-1 \bmod n\},
    \] 
    and $i$ and $j$ are called the \emph{start} and \emph{end} of that arc, respectively. 
    
    We say that arc $r_n(i_1, j_1)$ is \emph{contained in} arc $r_n(i_2, j_2)$ if $r_n(i_1, j_1) \subseteq r_n(i_2, j_2)$.
\end{definition}

Next, we define how chords interleave each other. 

\begin{definition}[Interleaving Arcs]
    Let $s$ be an \ttt{ABO}-sequence of length $n$ with $|s|_{\mtt A} + |s|_{\mtt B} = n'$. We say that a sequence of arcs $\cal R = (r_n(i_k,j_k))_{k=0}^{n'/2-1}$ is \emph{interleaving} if it satisfies the following conditions:
    \begin{enumerate}[(i)]
    \item For each $k = 0, 1, \dots, n'/2-1$, we have $i_k < j_k$. That is, the arc does not wrap around the word.
    \item $i_0 < i_1 < \cdots < i_{n'/2-1}$ and $j_0 < j_1 < \cdots < j_{n'/2-1}$.
    \item \label{starting_ending} Each letter \ttt A is the start of an arc in $\cal R$, and each letter \ttt B is the end of an arc in $\cal R$.
\end{enumerate}
\end{definition}

Notice that a sequence of interleaving arcs $\cal R = (r_n(i_k,j_k))_{k=0}^{n'/2-1}$ establishes a bijection between \ttt As and \ttt Bs in $s$. More specifically, $s\ind {i_k}$ corresponds to $s\ind {j_k}$ for $k = 0,1,\dots,n'/2-1$. Note that \ttt Os are not parts of the bijection.

Next, we show the existence of a sequence of interleaving arcs.
\begin{theorem}\label{existence of interleaving arcs}
Let $s$ be an $\mathtt{AB}$-sequence of length $n$ such that $0$ is a valley. Then, there is a sequence of interleaving arcs corresponding to $s$. More specifically, the following algorithm returns an interleaving sequence of arcs for $s$.
\begin{algorithm}[H]
\caption{Constructing the Interleaving Arcs}
\begin{algorithmic}[1]
\State Let $a_0 < a_1 < \cdots < a_{n/2-1}$ be the indices of $\mathtt{A}$ in $s$
\State Let $b_0 < b_1 < \cdots < b_{n/2-1}$ be the indices of $\mathtt{B}$ in $s$
\State \Return $(r_n(a_k, b_k))_{k=0}^{n/2-1}$
\end{algorithmic}
\end{algorithm}
\end{theorem}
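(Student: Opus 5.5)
We check the three conditions in the definition of interleaving arcs for $\mathcal R = (r_n(a_k,b_k))_{k=0}^{n/2-1}$. First, the index range is correct: $s$ is an \ttt{AB}-sequence, so it has no \ttt{O}s and $|s|_{\mtt A} = |s|_{\mtt B} = n/2$. Hence the number $n'$ of \ttt A and \ttt B letters equals $n$, and there are exactly $n/2$ indices $a_k$ and $n/2$ indices $b_k$. Conditions (ii) and (iii) are then immediate from the construction. The starts $a_0<\cdots<a_{n/2-1}$ and the ends $b_0<\cdots<b_{n/2-1}$ are listed in increasing order. Every \ttt A is some $a_k$, hence the start of an arc, and every \ttt B is some $b_k$, hence the end of an arc.

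The real content is condition (i), namely $a_k < b_k$ for every $k$. This is where the hypothesis that $0$ is a valley enters. Suppose for contradiction that $b_k < a_k$ for some $k$ (equality is impossible, since $s\ind{a_k}=\mtt A \neq \mtt B = s\ind{b_k}$). Consider the prefix $s\ssq{0:b_k+1} = s\ind 0 \cdots s\ind{b_k}$. It contains the $k+1$ letters $\mtt B$ at positions $b_0,\dots,b_k$. Every \ttt A in this prefix is at a position $\le b_k < a_k$, so only $a_0,\dots,a_{k-1}$ can lie in it, giving at most $k$ \ttt As. Thus the prefix has strictly more \ttt Bs than \ttt As.

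We must check that this prefix is among those covered by the valley condition. The condition at $0$ constrains $s\ssq{0:k'}$ for $k'=0,\dots,n-1$, so we need $b_k+1\le n-1$. Suppose instead $b_k = n-1$. Then $a_k \le n-1 = b_k$, which contradicts $b_k<a_k$. Hence $b_k+1 \le n-1$, the prefix is allowed, and it contradicts the valley property. Therefore $a_k<b_k$ for all $k$.

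The only delicate point is this boundary bookkeeping: the valley definition ranges over $k$ only up to $n-1$, and the contradiction must land inside that range. Alternatively, one can note that the full word $s\ssq{0:n}$ is balanced and so also satisfies the inequality, which makes the range restriction harmless. With conditions (i)–(iii) verified, the algorithm's output is an interleaving sequence of arcs for $s$. \qed
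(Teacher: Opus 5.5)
Your proof is correct and follows the paper's argument. Both obtain $a_k<b_k$ by applying the valley condition to the prefix ending at position $b_k$: that prefix contains $k+1$ \texttt{B}s, so at least $k+1$ \texttt{A}s must occur before $b_k$. Both read off (ii) and (iii) directly from the construction. Your explicit check that this prefix falls within the range $k'\le n-1$ of the valley definition is a bookkeeping detail the paper leaves implicit.
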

\begin{proof}
Because $0$ is a valley, any prefix substring contains at least as many \ttt As as \ttt Bs. That is, the first $b_k+1$ letters of $s$ contain exactly $k+1$ copies of \ttt B, so the first $b_k$ letters of $s$ contain at least $k+1$ copies of \ttt A, implying that $a_k < b_k$ for all $k$. Thus, the first condition for interleaving arcs is satisfied. The second and third conditions for interleaving arcs follow directly from the algorithm. \qed
\end{proof}

Finally, we show that if there is a sequence of interleaving arcs for $s$ such that all arcs have equal size, then $s$ is shift-compatible.
\begin{theorem}\label{AB shift-compatible}
    Let $d$ be a positive integer and let $s$ be an $\ttt{ABO}$-sequence where there exist interleaving arcs $\cal R$ such that each arc has size equal to $d$. Then, $s$ is shift-compatible with offset $d$.
\end{theorem}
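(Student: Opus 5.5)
The plan is to check the two bullet conditions in the definition of shift-compatibility directly. The main input is that the interleaving arcs give an exact correspondence: their starts are precisely the positions of $\mtt A$ and their ends are precisely the positions of $\mtt B$. Write $\cal R = (r_n(i_k,j_k))_{k=0}^{m-1}$ with $m = n'/2$. The hypothesis that every arc has size $d$, together with condition (i) ($i_k<j_k$, so no arc wraps), gives $j_k = i_k + d$ with $0 \le i_k < j_k \le n-1$.

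First I establish this correspondence, which I expect to be the only step needing care. By condition (ii) the starts $i_0<\dots<i_{m-1}$ are $m$ distinct indices, and likewise the ends. By condition (iii) every $\mtt A$ is a start. Since $|s|_{\mtt A}=m$ equals the number of distinct starts, the set of starts is exactly the set of $\mtt A$-positions. The same counting argument, using $|s|_{\mtt B}=m$, shows the set of ends is exactly the set of $\mtt B$-positions.

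Next, the first bullet. Suppose $s\ind i=\mtt A$ (there is no $\mtt X$ in an \ttt{ABO}-sequence). Then $i=i_k$ for some $k$. Since $i_k+d=j_k<n$, we get $s\ind{i+d \bmod n}=s\ind{j_k}=\mtt B\in\{\mtt X,\mtt B\}$.

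Finally, the second bullet. Suppose $s\ind i\in\{\mtt B,\mtt O\}$, and assume for contradiction that $s\ind{i+d\bmod n}\notin\{\mtt A,\mtt O\}$, so that it equals $\mtt B$. Then $i+d\bmod n=j_k$ for some $k$, so $i\equiv j_k-d=i_k \pmod n$. Both $i$ and $i_k$ lie in $\bb Z_n$, so $i=i_k$. Hence $s\ind i=\mtt A$, contradicting $s\ind i\in\{\mtt B,\mtt O\}$. This proves both bullets, so $s$ is shift-compatible with offset $d$.
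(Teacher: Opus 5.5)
Your proof is correct and follows the paper's route: from equal arc sizes and condition (i) you get $j_k = i_k + d$, then check the $\mathtt{A}$-condition directly and the $\{\mathtt{B},\mathtt{O}\}$-condition by contrapositive (a $\mathtt{B}$ at $i+d$ forces an $\mathtt{A}$ at $i$). Your counting step (distinct starts, $|s|_{\mathtt A} = |s|_{\mathtt B} = n'/2$) shows that the starts are exactly the $\mathtt{A}$-positions and the ends exactly the $\mathtt{B}$-positions; this spells out a point the paper leaves implicit when it reads $s^{(i_k)} = \mathtt{A}$ and $s^{(j_k)} = \mathtt{B}$ directly off condition (iii).
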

\begin{proof}
    Since $s$ does not contain \ttt X, the shift-compatible condition becomes 
    \begin{align*}
        \text{If } s\ind i = \mtt A\ 
        &\text{then}\ s\ind{i+d \bmod n} = \mtt B\tag{*}\\
        \text{If } s\ind i = \mtt B\ 
        \text{or}\ s\ind{i} = \mtt O \ &\text{then}\ s\ind{i+d \bmod n} = \mtt A \ \text{or}\ s\ind{i+d \bmod n} = \mtt O \tag{**}
    \end{align*}

    Since each arc $r_n(i_k, j_k)$ has size $d$, we have $j_k = i_k + d$ for all $k$. By condition (iii) of interleaving arcs, $s\ind{i_k} = \mtt{A}$ and $s\ind{j_k} = \mtt{B}$, so $s\ind{i} = \mtt{A}$ if and only if $s\ind{i+d \bmod n} = \mtt{B}$. The condition $(*)$ is satisfied directly. The contrapositive of condition $(**)$ is that if $s \ind {i+d \bmod n} = \mtt B$, then $s\ind i = \mtt A$, which is true too. \qed
\end{proof}

\subsection{Constructing Insertion of \ttt Os}
In this section, we describe an algorithm for inserting $\mtt{O}$s into an \ttt{AB}-sequence, given an interleaving sequence of arcs, so that all arcs have equal size. Since a similar construction applies when $s$ is an \ttt{ABX}- or \ttt{ABXO}-sequence with additional restrictions, we introduce the following definition.
\begin{definition}[Inverted Indices]
    Let $s$ be an \ttt{ABXO}-sequence of length $n$. We call an index $t \in \mathbb{Z}_n$ \emph{inverted} if $s\ind{t} = \mtt{B}$ and $s\ind{t+1 \bmod n} = \mtt{A}$.
\end{definition}
\begin{theorem}\label{Inserting O}
    Let $s$ be an \ttt{AB}-sequence of length $n$ such that $0$ is a valley. Let $\cal R = (r_n(i_k,j_k))_{k=0}^{n/2-1}$ be a sequence of interleaving arcs for $s$. Let $(c_k)_{k=0}^{n/2-1}$ be a sequence of integers. Then, there is a sequence of non-negative integers $b_0,b_1,\dots,b_{n-1}$ and a common non-negative integer $d$ such that \[c_k + \sum_{l = i_k}^{j_k-1} b_{l} = d\] for all $k = 0,1,2,\dots, n/2-1$, and for all inverted indices $t$, we have $b_t = 0$.
\end{theorem}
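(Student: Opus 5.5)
The plan is to reduce the statement to a triangular linear system in the unknowns $b_l$. I would put all the weight on a small set of positions that are never inverted, and then show that a large enough choice of $d$ keeps every value non-negative. Write $t_k = d - c_k$ for the target length of arc $k$, and $P(m) = \sum_{l<m} b_l$. The requirement becomes $P(j_k) - P(i_k) = t_k$ for every $k$.

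\textbf{Step 1: identify safe positions.} By Theorem~\ref{existence of interleaving arcs} and condition~(iii), $s\ind{j_k} = \mtt B$. Hence the position $p_k := j_k - 1$ is followed by a $\mtt B$, so it is not inverted. Likewise every $\mtt A$-position $i_k$ is not inverted, since an inverted index carries a $\mtt B$.

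Using condition~(ii) (both endpoint sequences are increasing), I would describe which arcs contain a given position $l$. They form a contiguous block $[\beta_l, \alpha_l)$ of arc indices, where $\beta_l$ is the number of $\mtt B$s and $\alpha_l$ the number of $\mtt A$s in $s\ind 0\cdots s\ind l$. For $l = p_k$ this block is $[k, \alpha_{p_k})$: the arc $k$ together with some later arcs. For $l = i_k$ it is $[\beta_{i_k}, k+1)$: the arc $k$ together with some earlier arcs. All other positions, in particular every inverted index, receive $b_l = 0$. The condition $b_t = 0$ at inverted indices then holds automatically.

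\textbf{Step 2: sweep the arcs in increasing order of $k$.} Arc $k$ is the first arc to contain $p_k$, and $p_k$ lies in no arc $k'<k$. So when I process arc $k$, the only still-undetermined weight inside it is $b_{p_k}$ (together with the later start weights, if I use them). I would set
\[
b_{p_k} = t_k - \sum_{m<k,\ p_m \ge i_k} b_{p_m} - (\text{start contributions already fixed}).
\]
Since $j_0 < j_1 < \cdots$, the indices $m<k$ with $p_m \ge i_k$ form a contiguous block $[\mu_k, k)$, and $\mu_k$ is non-decreasing in $k$. Comparing with the equation already solved for arc $k-1$ gives
\[
\sum_{m\in[\mu_k,k)} b_{p_m} = t_{k-1} - \sum_{m\in[\mu_{k-1},\mu_k)} b_{p_m} \le t_{k-1},
\]
provided the earlier values are non-negative. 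Hence $b_{p_k} \ge t_k - t_{k-1} = c_{k-1} - c_k$, which is a bounded quantity independent of $d$.

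\textbf{Step 3: repair the possible negativity.} This is the step I expect to be the main obstacle. The bound from Step 2 shows that $b_{p_k}$ can be negative only by a bounded amount, namely at most $\max_k |c_{k-1}-c_k|$. Increasing $d$ by itself does not help, because the subtracted sum grows with $d$ as well. So I would use the second family of safe positions, the starts $i_k$, as slack.

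Concretely, I would first put a large constant $L$ on every start $b_{i_k}$ and then run the sweep of Step 2 with the modified targets. Because a start weight at $i_k$ reaches only arcs $\le k$, while an end weight at $p_k$ reaches only arcs $\ge k$, the combined system is still triangular. The aim is to show that $b_{p_k}$ then takes the form (a non-negative multiple of $L$) $+$ (a term bounded in terms of the $c$'s). After that, choosing $L$ and $d$ large enough makes every entry non-negative.

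If this bookkeeping proves awkward, my fallback is induction on $n/2$. I would delete the pair formed by the last $\mtt A$ and the last $\mtt B$, which keeps $0$ a valley and keeps the remaining arcs interleaving. I would then solve the smaller instance with a common length $d'$ and re-equalize by adding weight at $p_{n/2-1}$ and at $i_{n/2-1}$, neither of which is inverted.
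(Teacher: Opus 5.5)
Your Steps 1 and 2 are fine. With all start weights zero, your sweep is essentially the paper's particular solution. But the proof stops exactly where you expect: Step 3 is only stated as an aim, and the concrete repair you propose (a uniform weight $L$ on every start, then the triangular sweep) fails.

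Take $s=\mathtt{AABAABBB}$. Its arcs are $r_8(0,2)$, $r_8(1,5)$, $r_8(3,6)$, $r_8(4,7)$. The starts are $0,1,3,4$, the end positions are $p_k=1,4,5,6$, and the inverted indices are $2$ and $7$.
\begin{itemize}
\item Arc $0$ forces $b_1=d-c_0-L$.
\item Arc $1$ reads $b_1+b_2+b_3+b_4=d-c_1$ with $b_2=0$ and $b_3=L$. It therefore forces $b_4=c_0-c_1$, whatever $d$ and $L$ are.
\end{itemize}
With the intended choice $c_k=j_k-i_k=(2,4,3,3)$ this gives $b_4=-2$, for every choice of $L$ and $d$. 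A valid solution does exist: $b_0=2$, $b_5=1$, all other $b_l=0$, and $d=4$. But it puts more weight on $i_0$ than on the other starts, which no uniform $L$ can produce. Your scheme only moves the particular solution along two fixed directions (raising $d$ and raising $L$). At $p_1=4$ both directions have coefficient zero, so no amount of enlarging rescues that coordinate.

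The fallback induction has a similar hole. After you reinsert the last pair, the two allowed additions behave as follows.
\begin{itemize}
\item Weight at $p_{n/2-1}$ lengthens only the new arc.
\item Weight at $i_{n/2-1}$ lengthens the new arc together with every old arc having $j_k>i_{n/2-1}$.
\end{itemize}
Neither can repair a new arc that comes out \emph{too long}, and the induction hypothesis does not rule this out. For example, take $s=\mathtt{AAABBB}$ and $c=(0,0,5)$. For $\mathtt{AABB}$ the hypothesis may return $b^-=(d',0,d',0)$. This lifts to old arcs of sum $d'$ (correct) and a last arc of sum $d'$ instead of $d'-5$. Here every old arc contains $i_2$, so nothing you allow can shorten the difference.

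Both versions are missing the ingredient the paper supplies in Step~2. That ingredient is a family of \emph{non-negative} vectors, zero on inverted indices, each with all arc sums equal, which together are positive at every non-inverted position. The paper builds one such $b^{(h)}$ for each non-inverted $h$, supported on $h$, the starts and the ends, by a greedy pass outward from an arc containing $h$. Adding $\max_i|b'_i|$ times the sum of these vectors to your sweep solution $b'$ makes every coordinate non-negative while keeping all arc sums equal. That is what closes the argument.
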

\begin{remark}
    In the \ttt{AB}-sequence case, taking $c_k = j_k - i_k$ (the size of arc $r_n(i_k, j_k)$), the theorem implies that we can insert $b_i$ copies of $\mtt{O}$ between $s\ind{i}$ and $s\ind{i+1}$ for each $i = 0, 1, \dots, n-1$, so that all arcs in $\mathcal{R}$ have equal length $d$.
\end{remark}

\begin{proof}

\noindent\textbf{Step 1: Find a Particular Solution}

    This step finds a solution that allows $b_i$ to be negative.
    
    Fix $d = 0$. Since $j_0 < j_1 < \cdots < j_{n/2-1}$, each equation for $r_n(i_k, j_k)$ introduces a new free variable $b_{j_k-1}$ not appearing in any previous equation. We can therefore solve inductively to obtain an integer solution $b' = (b'_0, b'_1, \dots, b'_{n-1})$ satisfying \[
        c_k + \sum_{l=i_k}^{j_k-1} b'_l = 0
    \]
    for all $k = 0, 1, \dots, n/2-1$.
    
    Notably, only $b'_{j_k-1}$ is nonzero for each $k$. If $b'_t \ne 0$ where $t$ is an inverted index, we have that $t+1 = j_k$ for some $k$. Thus, $s\ind{t+1} = \mtt B$. However, since $t$ is an inverted index, $s\ind{t+1} = \mtt{A}$, which is a contradiction. Therefore, $b'_t = 0$ for all inverted indices $t$.
    
\noindent\textbf{Step 2: Find a Homogeneous Solution with Non-negative Coefficients}

    Consider the following system of equations \[
    \sum_{l=i_k}^{j_k-1} b_l\ind h = \varepsilon_h
    \]
    for all $k = 0,1,\dots, n/2-1$, where $\varepsilon_h \in \{0,1\}$ is a common value for all equations.
    
    For each non-inverted index $h$, we find a solution to that system of equations $b\ind h = (b_0\ind h,b_1\ind h,\dots,b_{n-1}\ind h)$ such that $b_h\ind h \geq 1$ and $b_t\ind h = 0$ for all inverted indices $t$ using the following algorithm. If some arc contains $h$, the algorithm returns a solution with $\varepsilon_h = 1$; otherwise, it returns a solution with $\varepsilon_h = 0$.

\begin{algorithm}[H]
\caption{Finding Homogeneous Solution for Column $h$}
\begin{algorithmic}[1]
\State $b_l\ind{h} \gets 0$ for all $l$; \quad $b_h\ind{h} \gets 1$
\State Find $k'$ such that $h \in r_n(i_{k'}, j_{k'})$ (if multiple exist, choose any one; if none exists, skip the loop and return $b\ind{h}$)
\For{$k \neq k'$ in order $k'-1, k'-2, \dots, 0, k'+1, k'+2, \dots, n/2-1$}
    \If{$\sum_{l=i_k}^{j_k-1} b_l\ind{h} = 0$}
        \State $b_{i_k}\ind{h} \gets 1$ \textbf{if} $k < k'$, \quad $b_{j_k-1}\ind{h} \gets 1$ \textbf{if} $k > k'$
    \EndIf
\EndFor
\State \Return $b\ind{h}$
\end{algorithmic}
\end{algorithm}
If no arc contains $h$, the returned vector has $b_h\ind h = 1$ as its only nonzero entry, so every arc sum is $0$ and the system is satisfied with $\varepsilon_h = 0$. Otherwise, the algorithm is correct by the following observation. At each iteration $k$, the equation for arc $r_n(i_k, j_k)$ 
is satisfied (i.e., $\sum_{l=i_k}^{j_k-1} b_l\ind{h} = 1$). Moreover, it remains satisfied in all subsequent iterations. This is because later iterations $k''$ only modify $b_{i_{k''}}\ind{h}$ or $b_{j_{k''}-1}\ind{h}$, which lie outside $r_n(i_k, j_k)$. Hence the sum over 
$r_n(i_k, j_k)$ is not affected.

Notably, only $b_{i_k}\ind{h}$ and $b_{j_k-1}\ind{h}$ can be nonzero. By the same argument as in Step~1, $b_t\ind{h} = 0$ for all inverted indices $t$.

\noindent\textbf{Step 3: Find a Positive Integer Solution}

    Let the system of $n/2$ linear equations $c_k + \sum_{l=i_k}^{j_k-1} b_l = d$ be written as $Ab + c = d\mathbf{1}$, where $A \in \{0,1\}^{n/2 \times n}$ is the interval matrix of the arcs (its $k$-th row indicates $r_n(i_k,j_k)$), $b = (b_i)_{i=0}^{n-1} \in \mathbb{Z}^n$, $c = (c_k)_{k=0}^{n/2-1} \in \mathbb{Z}^{n/2}$, $d \in \mathbb{Z}$, and $\mathbf{1} \in \mathbb{Z}^{n/2}$ is the all-ones vector.

    The first step shows that there exists $b' \in \bb Z^n$ such that $Ab' +c = 0$ and $b'_t = 0$ for all inverted indices $t$. The second step shows that for each non-inverted index $h$ there exists $b\ind h \in \bb N_0^n$ such that $Ab\ind h = \varepsilon_h{\bf 1}$ with $\varepsilon_h \in \{0,1\}$, $b_h \ind h \geq 1$, and $b\ind h_t = 0$ for all inverted indices $t$.
    
    Let $d_m = \max_i|b'_i|$. Consider $b^* = b' + d_m (\sum_{h~\text{not inverted}} b \ind h)$. If $i$ is not an inverted index, then \[b^*_i = b'_i + d_m \sum_{h~\text{not inverted}} b\ind h_i \geq b_i' + d_m \geq b_i' + |b'_i| \geq 0.\] If $i$ is an inverted index, then $b^*_i = 0$. Thus, $b^*$ has non-negative elements and $Ab^* + c = 0 + d_m n_t {\bf 1}  = d_m n_t{\bf 1}$ where $n_t$ is the number of non-inverted indices $h$ with $\varepsilon_h = 1$. Setting $d^* = d_m n_t$, the pair $(b^*, d^*)$ solves the system of equations. Moreover, since $b'_t$ and $b\ind h_t$ are always zero for all inverted indices $t$, we have that $b^*_t$ is also zero.

    As a result, $b^*$ satisfies the theorem requirement. \qed
\end{proof}

Finally, we finish the construction for equalization of $s$.
\begin{theorem}
    Let $s$ be an \ttt{AB}-sequence. Then, $s$ is \ttt O-shift-equalizable.
\end{theorem}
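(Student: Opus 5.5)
We chain together the results established above. First, by the remark at the start of the subsection on cyclic shifts, we may replace $s$ by any of its cyclic shifts without changing whether it is $\mtt O$-shift-equalizable. By Lemma~\ref{Existence of Valley}, $s$ has a valley $a$; rotating so that $a$ becomes index $0$, we may assume $0$ is a valley. (If $s$ is empty there is nothing to prove, so assume $n \ge 2$.)

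Next, we apply Theorem~\ref{existence of interleaving arcs} to obtain interleaving arcs $\cal R = (r_n(i_k,j_k))_{k=0}^{n/2-1}$ for $s$. We then apply Theorem~\ref{Inserting O} with $c_k = j_k - i_k$, the size of the $k$-th arc. This yields non-negative integers $b_0,\dots,b_{n-1}$ and a common $d \ge 0$ with $(j_k - i_k) + \sum_{l=i_k}^{j_k-1} b_l = d$ for every $k$.

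Form $s'$ by inserting $b_l$ copies of $\mtt O$ immediately after $s\ind{l}$ for each $l$. In $s'$, let $i'_k$ and $j'_k$ be the new positions of the letters originally at $i_k$ and $j_k$. The letters strictly between them are the original letters at indices $i_k+1,\dots,j_k-1$ together with the $\mtt O$s inserted after indices $i_k,\dots,j_k-1$. Hence
\[
j'_k - i'_k = (j_k - i_k) + \sum_{l=i_k}^{j_k-1} b_l = d .
\]
Since insertion preserves the relative order of the original letters, the arcs $r_{n'}(i'_k, j'_k)$ still satisfy conditions (i)--(iii) of interleaving: they do not wrap around, their starts and ends are strictly increasing, and every $\mtt A$ and $\mtt B$ of $s'$ is a start or an end, because the inserted letters are $\mtt O$s, which are exempt. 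So $s'$ is an $\mtt{ABO}$-sequence with interleaving arcs all of size $d$. Note $d \ge 1$, since $j_k > i_k$ and each $b_l \ge 0$. Theorem~\ref{AB shift-compatible} therefore shows that $s'$ is shift-compatible with offset $d$, and so $s$ is $\mtt O$-shift-equalizable.

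The only step needing care is the bookkeeping that inserted $\mtt O$s placed after index $l$ fall inside exactly the arcs containing $l$. Inserting after $s\ind{l}$ for $l \in [i_k, j_k-1]$ lands strictly between positions $i_k$ and $j_k$, matching the sum range in Theorem~\ref{Inserting O}. Insertions after index $n-1$ (that is, $b_{n-1}$) lie outside every arc, because no arc ends past $n-1$. The inverted-index condition is not needed here; it is presumably reserved for the $\mtt X$ and $\mtt O$ extensions.
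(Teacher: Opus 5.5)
Your proof is correct and follows the paper's argument exactly: rotate to a valley (Lemma~\ref{Existence of Valley}), take the interleaving arcs of Theorem~\ref{existence of interleaving arcs}, apply Theorem~\ref{Inserting O} with $c_k = j_k - i_k$, and conclude with Theorem~\ref{AB shift-compatible}. Your explicit tracking of the new positions $i'_k, j'_k$ and your check that $d \ge 1$ (which Theorem~\ref{AB shift-compatible} needs, since it assumes a positive offset) fill in details the paper leaves implicit.
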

\begin{proof}
    From Lemma \ref{Existence of Valley}, we can construct $s_v$, which is a cyclic shift of $s$ such that $0$ is a valley. From Theorem \ref{existence of interleaving arcs}, $s_v$ has a sequence of interleaving arcs. By choosing $c_k = j_k-i_k$, Theorem~\ref{Inserting O} shows that we can obtain $s'$ by inserting \ttt Os into $s_v$ so that all arcs have equal size $d$. Finally, by Theorem \ref{AB shift-compatible}, $s'$ is shift-compatible.

    Since $s_v$ is a cyclic shift of $s$ and $s'$ is obtained from $s_v$ by inserting $\mtt{O}$s, we conclude that $s$ is $\mtt{O}$-shift-equalizable. \qed
\end{proof}
\begin{remark}\label{bound}
    After the insertion, the length of $s'$ can be bounded by $O(n^32^{n/2})$. More specifically, \[
    |b_{j_k-1}'| \leq |c_k| + \sum_{l=i_k}^{j_k-2} |b_l'| \leq |c_k| +\sum_{l=0}^{k-1} |b'_{j_l-1}|
    \]
    By induction, we have $|b'_{j_k-1}| \leq |c_k| + \sum_{l=0}^{k-1}2^{l}|c_{k-1-l}|$. Since $|c_k| \leq n$ for all $k$, we have $|b'_{j_k-1}| \leq n2^{k}$. Thus $d_m \leq n2^{n/2}$, so the offset is $d^* = d_m n_t = O(n^2 2^{n/2})$. Since each gap holds at most $d^*$ copies of \ttt O, the output length is $O(n \cdot d^*) = O(n^3 2^{n/2})$.
\end{remark}
\begin{remark}\label{sharper-bound}
We present a stronger bound for this problem, with a less constructive proof: using the matrix setup, there is a solution $b, d$ with offset $d \leq n \sum_{k} |c_k| = O(n^3)$. Since each gap holds at most $d$ copies of \ttt O, the output length is $O(n \cdot d) = O(n^4)$. The proof is given in Appendix~\ref{stronger_bound_proof}.

\end{remark}
\begin{example}
We run the complete construction on $s = \mtt{ABABBABA}$.

\smallskip
\noindent\textbf{Step 1 (shift to a valley).} Since $5$ is a valley index of $s$, we shift it so that index $0$ is a valley:
\[
s_v = \mtt{ABAABABB}.
\]

\smallskip
\noindent\textbf{Step 2 (interleaving arcs).} Applying Theorem~\ref{existence of interleaving arcs} to $s_v$, we get a collection of interleaving arcs \[
\cal R = \{r_8(0,1),r_8(2,4),r_8(3,6),r_8(5,7)\}
\]
\begin{figure}[H]
\centering
\begin{tikzpicture}[
  letter/.style={draw, circle, minimum size=8mm, inner sep=0pt, font=\ttfamily},
  arc/.style={-{Stealth[length=2mm]}, thick}]
  \foreach \i/\c in {0/A,1/B,2/A,3/A,4/B,5/A,6/B,7/B}
    \node[letter] (n\i) at (\i*1.1,0) {\c};
  \foreach \i in {0,...,7}
    \node[font=\scriptsize, text=gray] at (\i*1.1,-0.5) {\i};
  \draw[arc] (n0) to[bend left=70] (n1);
  \draw[arc] (n2) to[bend left=42] (n4);
  \draw[arc] (n3) to[bend left=58] (n6);
  \draw[arc] (n5) to[bend left=42] (n7);
\end{tikzpicture}
\caption{Interleaving arcs for $s_v = \mtt{ABAABABB}$: the $k$-th $\mtt A$ is
paired with the $k$-th $\mtt B$, giving $(0,1)$, $(2,4)$, $(3,6)$, $(5,7)$.
The $\mtt A$ endpoints increase from left to right, and so do the $\mtt B$
endpoints, so the arcs interleave.}
\label{fig:interleaving-arcs}
\end{figure}
\smallskip
\noindent\textbf{Step 3 (insert $\mtt O$s to equalize).} 
Using Theorem~\ref{Inserting O}, we can insert \ttt O so that each arc has equal length, as shown in the following figure. That is, we can insert \ttt Os into $s_v$ to obtain $s'= \mtt{AOOBAOABABOB}$, which is shift-compatible. This implies that $s$ is \ttt O-shift-equalizable.
\begin{figure}[H]
\centering
\begin{tikzpicture}[
  letter/.style={draw, circle, minimum size=7mm, inner sep=0pt, font=\ttfamily\small},
  arc/.style={-{Stealth[length=2mm]}, thick}]
  \foreach \i/\c in {0/A,1/O,2/O,3/B,4/A,5/O,6/A,7/B,8/A,9/B,10/O,11/B}
    \node[letter] (m\i) at (\i*0.9,0) {\c};
  \foreach \i in {0,...,11}
    \node[font=\scriptsize, text=gray] at (\i*0.9,-0.5) {\i};
  \draw[arc] (m0) to[bend left=50] (m3);
  \draw[arc] (m4) to[bend left=50] (m7);
  \draw[arc] (m6) to[bend left=50] (m9);
  \draw[arc] (m8) to[bend left=50] (m11);
\end{tikzpicture}
\caption{After inserting $\mtt O$s into $s_v$ (two after index $0$, one after
index $2$, one after index $6$), we obtain $s' = \mtt{AOOBAOABABOB}$. Every arc
now has length $3$, so $s'$ is shift-compatible with offset $3$.}
\label{fig:step3-arcs}
\end{figure}
\end{example}

\section{Construction for \ttt{ABX}-Sequences}
In this section, we provide the construction of insertion for \ttt{ABX}-sequences. Intuitively, we replace each \ttt X in $s$ with \ttt{BA} and proceed as if $s$ were an \ttt{AB}-sequence. We will prove why this works.
\begin{theorem}\label{sufficient shift compatibility condition}
    Let $d$ be a positive integer and let $s$ be an \ttt{ABXO}-sequence of length $n$ such that if $s\ind{i} \in \{\mtt{A}, \mtt{X}\}$, then $s\ind{i+d \bmod n} \in \{\mtt{X}, \mtt{B}\}$. Then, $s$ is shift-compatible with offset $d$.
\end{theorem}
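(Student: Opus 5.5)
We must show: if every $i$ with $s\ind i \in\{\mtt A,\mtt X\}$ satisfies $s\ind{i+d \bmod n}\in\{\mtt X,\mtt B\}$, then every $i$ with $s\ind i\in\{\mtt B,\mtt O\}$ satisfies $s\ind{i+d\bmod n}\in\{\mtt A,\mtt O\}$. The plan is a counting argument on the map $\phi(i) = i+d \bmod n$, which is a bijection of $\bb Z_n$.

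Let $P = \{i : s\ind i \in \{\mtt A,\mtt X\}\}$ and $Q = \{i : s\ind i \in \{\mtt X,\mtt B\}\}$. The hypothesis says exactly that $\phi(P) \subseteq Q$. Since $\phi$ is injective, $|\phi(P)| = |P| = |s|_{\mtt A}+|s|_{\mtt X}$. On the other hand, $|Q| = |s|_{\mtt X}+|s|_{\mtt B}$. Because $s$ is an \ttt{ABXO}-sequence, $|s|_{\mtt A}=|s|_{\mtt B}$, so $|P|=|Q|$. A subset of a finite set with the same cardinality equals it, so $\phi(P)=Q$.

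Applying the bijection $\phi$ to complements gives $\phi(\bb Z_n\setminus P) = \bb Z_n\setminus Q$. Now $\bb Z_n\setminus P$ is the set of indices carrying $\mtt B$ or $\mtt O$, and $\bb Z_n\setminus Q$ is the set carrying $\mtt A$ or $\mtt O$. So for every $i$ with $s\ind i\in\{\mtt B,\mtt O\}$ we get $s\ind{i+d\bmod n}\in\{\mtt A,\mtt O\}$, which is the second condition of shift-compatibility. The first condition is the hypothesis, so $s$ is shift-compatible with offset $d$.

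There is no real obstacle. The only point to flag is that the equality $|P|=|Q|$ uses the defining property $|s|_{\mtt A}=|s|_{\mtt B}$ of an \ttt{ABXO}-sequence; without it the inclusion $\phi(P)\subseteq Q$ could be strict. Positivity of $d$ is not needed, since any $d$ gives a bijection.
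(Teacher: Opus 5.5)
Your argument is correct, and it takes a different and shorter route than the paper's. The paper builds the directed graph with edges $i \to i+d \bmod n$ for $s^{(i)} \in \{\mathtt{A},\mathtt{X}\}$. It splits this graph into cycles (necessarily all $\mathtt{X}$), isolated vertices, and chains. It then uses the count $|s|_{\mathtt{A}} \le (\text{number of chains}) \le |s|_{\mathtt{B}}$ together with $|s|_{\mathtt{A}}=|s|_{\mathtt{B}}$ to conclude that every chain runs from an $\mathtt{A}$ to a $\mathtt{B}$. Hence every index carrying $\mathtt{X}$ or $\mathtt{B}$ has its predecessor $i-d$ in $\{\mathtt{A},\mathtt{X}\}$, and the second condition follows by contrapositive.

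You reach the same key fact, that $i \mapsto i+d$ maps $P$ onto $Q$, in one pigeonhole step: injectivity plus $|P|=|Q|$ forces $\phi(P)=Q$, and complementation under a bijection finishes it. Both proofs rest on exactly the same use of $|s|_{\mathtt{A}}=|s|_{\mathtt{B}}$.

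Your version avoids the component case analysis and makes plain that the positivity of $d$ plays no role. The paper's version, in exchange, shows the structure of the shift orbits: chains of the form $\mathtt{A}\mathtt{X}\cdots\mathtt{X}\mathtt{B}$ plus all-$\mathtt{X}$ cycles. That structure matches the arc picture used in the $\mathtt{ABX}$ construction, where each $\mathtt{X}$ is both the end of one arc and the start of the next.
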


This theorem shows that the second condition for shift-compatibility (if $s\ind{i} \in \{\mtt{B}, \mtt{O}\}$, then $s\ind{i+d \bmod n} \in \{\mtt{A}, \mtt{O}\}$) follows automatically from the first condition.

\begin{proof}
    Consider the directed graph $G = (V,E)$ where $V = \{ 0,1,\dots,n-1\}$ and $E = \{(i, i+d \pmod n)\mid s\ind i\in\{\mtt A,\mtt X\}\}$. Consequently, if an index has non-zero in-degree, it must be either \ttt X or \ttt B.
    
    Notice that this is a subgraph of $G' = (V,E')$ where $E' = \{(i,i+d \bmod n) \mid i \in \bb Z_n\}$, which contains only cycles. Therefore, each connected component of $G$ must be a chain (a component with at least one edge that is not a cycle), a cycle, or an isolated vertex.
    \begin{itemize}
        \item If it is a cycle, every vertex must have non-zero in-degree and out-degree. The only possibility is that all vertices are indices of \ttt X. That is because other letters either have zero in-degree, zero out-degree, or both.
        \item If it is an isolated vertex, it has zero in-degree and zero out-degree, so its letter must be \ttt B or \ttt O. We will see below that it must in fact be \ttt O.
        \item If that connected component is a chain, the last vertex must be an index of \ttt B because of its zero out-degree but non-zero in-degree. Moreover, \ttt A must be the start of a chain because of its zero in-degree but non-zero out-degree. Hence, $|s|_{\mtt A} \leq \text{the number of chains} \leq |s|_{\mtt B}$.  Because $|s|_{\mtt A} = |s|_{\mtt B}$ from the definition of \ttt{ABXO}-sequence, the equality holds, so every chain starts with \ttt A and ends with \ttt B. In particular, every \ttt B is the end of a chain, so no \ttt B is an isolated vertex.
    \end{itemize} 

    This implies that, if $s\ind{i+d \bmod n} \in \{\mtt X, \mtt B\}$, we have that $s\ind i \in \{\mtt A, \mtt X\}$. The second shift-compatibility condition follows by contrapositive.
    \qed
\end{proof}

\begin{theorem} \label{ABX equalizable}
    Let $s$ be an \ttt{ABX}-sequence. Then, $s$ is \ttt O-shift-equalizable.
\end{theorem}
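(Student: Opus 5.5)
The plan is to reduce to the \ttt{AB} case. We replace every \ttt X of $s$ by the two-letter block \ttt{BA}, run the construction of the previous section, and contract the blocks back. The key tool is the inverted-index clause of Theorem~\ref{Inserting O}: it guarantees that no \ttt O is ever inserted between the \ttt B and the \ttt A of such a block.

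\textbf{Trivial case.} If $|s|_{\mtt A}=|s|_{\mtt B}=0$, then $s$ consists only of \ttt Xs and is shift-compatible with offset $1$, so we are done. Assume from now on that $s$ contains an \ttt A.

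\textbf{Expansion and rotation.} Let $t$ be the \ttt{AB}-sequence obtained from $s$ by replacing each \ttt X with \ttt{BA}; it still has equally many \ttt As and \ttt Bs. By Lemma~\ref{Existence of Valley}, $t$ has a valley $a$, and we rotate $t$ (and correspondingly $s$) so that $a=0$. The valley may fall strictly inside a \ttt{BA} block, with the \ttt B at the end of the rotated word and the \ttt A at position $0$. I would not try to avoid this. Instead, all index bookkeeping will be done cyclically, treating such a block as wrapping around. The gap between positions $n-1$ and $0$ of $t$ is then an inverted index, so the inserted count there is forced to be $0$, and the wrapped pair still contracts to a single \ttt X in a cyclic rotation of $s$.

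\textbf{Arcs and the choice of $c_k$.} Theorem~\ref{existence of interleaving arcs} gives interleaving arcs $r(i_k,j_k)$ on $t$. Let $\pi$ map positions of $t$ to positions of $s$, sending both letters of a \ttt{BA} block to the index of the corresponding \ttt X. Set $c_k=\pi(j_k)-\pi(i_k)$, the distance in $s$ between the image of the $k$-th \ttt A and the image of the $k$-th \ttt B. This is a positive integer, because the arc contains at least one gap that is not internal to a block.

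Apply Theorem~\ref{Inserting O} with these $c_k$ to get non-negative $b_l$ and a common $d$ with $b_t=0$ on all inverted indices; in particular, every gap inside a \ttt{BA} block receives nothing. The gaps of $t$ other than the within-block gaps correspond bijectively to the gaps of $s$. So we insert $b_l$ copies of \ttt O into the corresponding gap of $s$ and call the result $s'$, with length $n'$. After this insertion, the distance in $s'$ from the image of $i_k$ to the image of $j_k$ is exactly $c_k+\sum_{l=i_k}^{j_k-1} b_l=d$.

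\textbf{Verification.} Every letter of $s'$ in $\{\mtt A,\mtt X\}$ is the image of some \ttt A of $t$; for an \ttt X, this is the \ttt A of its block. That \ttt A is the start $i_k$ of an arc, and $j_k$ is a \ttt B of $t$, whose image in $s'$ is \ttt B or \ttt X. Hence $s'^{(p)}\in\{\mtt A,\mtt X\}$ implies $s'^{(p+d \bmod n')}\in\{\mtt X,\mtt B\}$. Since $s'$ is an \ttt{ABXO}-sequence, Theorem~\ref{sufficient shift compatibility condition} upgrades this to full shift-compatibility with offset $d$.

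The main obstacle is the index bookkeeping, specifically showing that $c_k$ plus the inserted counts really equals the arc length in $s'$. This requires the within-block gaps, including the possible wrap-around gap created by the valley rotation, to contribute nothing. That is exactly what the inverted-index condition supplies, and I would check it carefully in the wrap-around case.
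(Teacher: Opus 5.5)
Your proposal is correct and is essentially the paper's proof: expand each \texttt{X} to \texttt{BA}, rotate to a valley, take the interleaving arcs, and choose $c_k$ as the arc length minus the number of within-block gaps (your $\pi(j_k)-\pi(i_k)$, read as a forward cyclic distance, is exactly the paper's $j^e_k-i^e_k-v_k$); then use the inverted-index clause of Theorem~\ref{Inserting O} to keep every block unsplit, contract, and finish with Theorem~\ref{sufficient shift compatibility condition}. Your cyclic handling of a block split by the valley matches the paper's observation that this pair lies inside no arc (since $j_k-1\le n^e-2$), but you should say explicitly that $c_k$ means the forward distance, because a literal difference of positions in $s$ is negative for the arc that starts or ends at the wrapped \texttt{X}; the separate all-\texttt{X} case is harmless but not needed.
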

\begin{proof}
    Suppose that $|s|_{\mtt X} = m$ and the length of $s$ is $n$. Let $s^{e}$ (the \emph{expansion} of $s$) be the \ttt{AB}-sequence constructed by replacing each \ttt{X} in $s$ with \ttt{BA}. Thus, the length of $s^{e}$ is $n^{e} = n + m$.
    
    Similar to the construction for \ttt{AB}-sequences, Lemma \ref{Existence of Valley} guarantees that there is a cyclic shift $s^{e}_v$ of $s^{e}$ such that $0$ is a valley of $s^{e}_v$. Let $t_0<t_1<\dots<t_{m-1}$ be the inverted indices of $s^{e}_v$ caused by expanding \ttt X to \ttt{BA} in the earlier step. Call those $m$ indices \emph{induced inverted indices}.
    
    From Theorem \ref{existence of interleaving arcs}, $s^{e}_v$ has a sequence of interleaving arcs $\cal R^{e} = (r_{n^{e}}(i^{e}_k,j^{e}_k))_{k=0}^{n^{e}/2-1}$. Let $v_k$ be the number of induced inverted indices in $r_{n^e}(i^e_k,j^e_k)$. By choosing $c_k = j^e_k-i^e_k - v_k$, there is a way to obtain ${s^e_v}'$ by inserting \ttt Os into $s^e_v$ so that the length of each arc minus the number of induced inverted indices contained in that arc is the same for all arcs and no \ttt Os are inserted after induced inverted indices.

    After that, we obtain $s_v'$ by reducing the pairs \ttt{BA} in ${s_v^e}'$ that we expanded earlier to be \ttt X. This reduction is possible because we did not insert any \ttt O between those pairs. Because of this reduction, the sequence of interleaving arcs $\cal R^e$ induces a collection of arcs $\cal R$ in $s_v'$. 

    Notice that $\cal R = (r_n(i_k,j_k))_{k=0}^{n^e/2-1}$ is generally similar to a sequence of interleaving arcs, except that at most one arc may wrap around index $0$. More specifically, $\cal R$ has the following property:
    \begin{itemize}
        \item Each letter \ttt A is the start of an arc in $\cal R$, each letter \ttt B is the end of an arc in $\cal R$, and each letter $\ttt X$ is both the start of one arc and the end of another arc. 
        \item Each arc in $\cal R$ has cyclic length exactly $d$, that is, $j_k \equiv i_k + d \pmod{n}$, where $d$ is the common value from Theorem~\ref{Inserting O}.
        \item At most one arc wraps around index $0$. This happens only when an induced pair itself wraps around the word, that is, its \ttt B is at the last index and its \ttt A is at index $0$. In that case, the arc ending at that \ttt B now ends at the merged \ttt X at index $0$. All other arcs satisfy $i_k < j_k$.
    \end{itemize}
    The first condition is true because \ttt X functions as both \ttt A and \ttt B. The second condition is true because of the choice of $c_k$: reducing an induced pair contained in an arc shortens that arc by exactly one, and no \ttt O was inserted between the letters of a pair. The wrapping pair is never contained in any arc, since every arc's index set ends at $j_k - 1 \le n^e - 2$; reducing it moves the last position onto index $0$ without changing the cyclic length of any arc.
    
    From Theorem \ref{sufficient shift compatibility condition}, the existence of $\cal R$ implies that the word $s_v'$ is shift-compatible.

    So far, given $s$, we expand it to $s^e$ and cyclically shift it to $s^e_v$. We insert \ttt Os into $s^e_v$ to obtain ${s^e_v}'$ without separating any \ttt{BA} pairs, and we reduce it to $s_v'$. The operations ensure that we can insert \ttt Os into $s$ so that it is cyclically equivalent to $s_v'$. Therefore, $s$ is \ttt O-shift-equalizable.
\end{proof}
\begin{remark}
    Note that the proof also works when the valley is an index of \ttt A created by the expansion of \ttt X.
\end{remark}

\begin{example}
We run the construction on the \ttt{ABX}-sequence $s = \mtt{AXBABX}$.

\smallskip
\noindent\textbf{Step 1 (expand each $\mtt X$ to $\mtt{BA}$).} Replacing both $\mtt X$s gives
\[
s^e = \mtt{ABABABBA}.
\]
Shifting so that the first index is a valley, we obtain
\[
s^e_v = \mtt{AABABABB}.
\]

\smallskip
\noindent\textbf{Step 2 (find arcs and insert $\mtt O$).} The interleaving arcs of $s^e_v$ are
\[
\cal R = \{r_8(0,2),\, r_8(1,4),\, r_8(3,6),\, r_8(5,7)\},
\]
Inserting \ttt O gives
\[
{s^e_v}' = \mtt{AAOBABABOB}.
\]
Note that this sequence is not shift-compatible yet.
\begin{figure}[H]
\centering
\begin{tikzpicture}[
  letter/.style={draw, circle, minimum size=8mm, inner sep=0pt, font=\ttfamily},
  arc/.style={-{Stealth[length=2mm]}, thick},
  frame/.style={teal!70!black, dashed, rounded corners}]
  \foreach \i/\c in {0/A,1/A,2/B,3/A,4/B,5/A,6/B,7/B}
    \node[letter] (n\i) at (\i*1.1,0) {\c};
  \foreach \i in {0,...,7}
    \node[font=\scriptsize, text=gray] at (\i*1.1,-0.72) {\i};
  % arcs
  \draw[arc] (n0) to[bend left=45] (n2);
  \draw[arc] (n1) to[bend left=64] (n4);
  \draw[arc] (n3) to[bend left=50] (n6);
  \draw[arc] (n5) to[bend left=45] (n7);
  % middle induced pair (indices 2,3) -- sits inside arc (1,4)
  \draw[frame] (1.75,-0.45) rectangle (3.75,0.45);
  % wrapping induced pair (indices 7 and 0)
  \draw[frame] (-0.45,-0.45) rectangle (0.45,0.45);
  \draw[frame] (7.25,-0.45) rectangle (8.15,0.45);
  \draw[frame] (7.7,-0.55) .. controls (7.7,-1.7) and (0,-1.7) .. (0,-0.55);
  \node[font=\scriptsize, text=teal!70!black] at (3.85,-1.95) {induced pair wraps};
\end{tikzpicture}
\caption{Step 2 for $s^e_v = \mtt{AABABABB}$: arcs $(0,2),(1,4),(3,6),(5,7)$. Notice that this induced pair wraps around the end and the start of the word. This does not break our algorithm.}
\label{fig:axbabx-step2}
\end{figure}
\begin{figure}[H]
\centering
\begin{tikzpicture}[
  letter/.style={draw, circle, minimum size=8mm, inner sep=0pt, font=\ttfamily},
  arc/.style={-{Stealth[length=2mm]}, thick},
  frame/.style={teal!70!black, dashed, rounded corners}]
  \foreach \i/\c in {0/A,1/A,2/O,3/B,4/A,5/B,6/A,7/B,8/O,9/B}
    \node[letter] (n\i) at (\i*1.0,0) {\c};
  \foreach \i in {0,...,9}
    \node[font=\scriptsize, text=gray] at (\i*1.0,-0.72) {\i};
  % arcs (arc (1,5) is one longer)
  \draw[arc] (n0) to[bend left=42] (n3);
  \draw[arc] (n1) to[bend left=64] (n5);
  \draw[arc] (n4) to[bend left=42] (n7);
  \draw[arc] (n6) to[bend left=42] (n9);
  % middle induced pair (indices 3,4) -- inside arc (1,5)
  \draw[frame] (2.55,-0.45) rectangle (4.45,0.45);
  % wrapping induced pair (indices 9 and 0)
  \draw[frame] (-0.45,-0.45) rectangle (0.45,0.45);
  \draw[frame] (8.55,-0.45) rectangle (9.45,0.45);
  \draw[frame] (9.0,-0.55) .. controls (9.0,-1.7) and (0,-1.7) .. (0,-0.55);
  \node[font=\scriptsize, text=teal!70!black] at (4.5,-1.95) {induced pair wraps};
\end{tikzpicture}
\caption{After inserting $\mtt O$ (at indices $2,8$): ${s^e_v}' = \mtt{AAOBABABOB}$,
with arc lengths $3,4,3,3$. The long arc $(1,5)$ still holds the middle induced
pair, so the word is not yet shift-compatible.}
\label{fig:axbabx-after-O}
\end{figure}
\smallskip
\noindent\textbf{Step 3 (collapse $\mtt{BA}$ back to $\mtt X$).} Collapsing both induced pairs to $\mtt X$ shrinks the long arc to match the rest and yields
\[
s' = \mtt{XAOXBABO},
\]
which is shift-compatible with offset $3$. Hence $s$ is \ttt O-shift-equalizable.
\end{example}

\section{Construction for \ttt{ABXO}-Sequences}
Finally, we provide the construction for equalizing any \ttt{ABXO}-sequence.
\begin{theorem}
    Let $s$ be an \ttt{ABXO}-sequence. Then, $s$ is \ttt O-shift-equalizable.
\end{theorem}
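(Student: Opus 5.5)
Given an \ttt{ABXO}-sequence $s$, I would delete all of its \ttt O letters to obtain an \ttt{ABX}-sequence $\hat s$. Then I would run the construction of Theorem~\ref{ABX equalizable} on $\hat s$, modified so that the original \ttt Os can be put back in. The key observation is that in any shift-compatible word, \ttt Os may be freely enlarged in a controlled way: we are allowed to insert \ttt Os anywhere, so the original \ttt Os of $s$ can simply be counted as part of the inserted \ttt Os. More precisely, it suffices to produce an insertion into $\hat s$ in which every gap of $\hat s$ that contained $g$ original \ttt Os of $s$ receives at least $g$ inserted \ttt Os. Then $s$ embeds into the result with the surplus counted as new insertions.

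To obtain this lower bound on gaps, I would revisit Theorem~\ref{Inserting O}. Step~3 produces $b^* = b' + d_m\sum_h b\ind h$. I would instead take $b^* = b' + N\sum_h b\ind h$ for an arbitrarily large integer $N \ge d_m$. The same computation gives $b^*_i \ge b'_i + N \ge N - d_m$ at every non-inverted index $i$, and the common value becomes $N n_t$. Choosing $N \ge d_m + \max g$ makes every non-inverted gap of the expanded word hold at least $\max g$ \ttt Os.

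The remaining issue is the gaps where no \ttt O may be inserted, namely the induced inverted indices inside expanded \ttt X pairs. These correspond to positions strictly inside a single letter \ttt X, not to gaps of $\hat s$, so they never need to host original \ttt Os. A genuine inverted index of $s^e_v$ (a \ttt B followed by an \ttt A, both original letters) is forbidden in Theorem~\ref{Inserting O} only as a side effect. I would therefore restate that theorem with an arbitrary designated set of forbidden gaps that avoids all arc ends $j_k-1$ and starts $i_k$. Steps~1 and~2 only ever touch the entries $b_{j_k-1}$ and $b_{i_k}$. I would then forbid only the induced inverted indices. Such an index $t$ has $s^e_v\ind{t}=\mtt B$, so $t\neq i_k$, and $s^e_v\ind{t+1}=\mtt A$, so $t\neq j_k-1$. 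This makes every gap of $\hat s$ (including the wrap-around gap, handled through the cyclic shift) eligible to receive many \ttt Os.

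Then I would follow the proof of Theorem~\ref{ABX equalizable} verbatim. Contract the \ttt{BA} pairs to obtain a word $s'_v$ whose arcs all have cyclic length $d$; by Theorem~\ref{sufficient shift compatibility condition} it is shift-compatible, and the existing \ttt O letters cause no trouble. Finally, undo the cyclic shift and reinterpret $g$ of the \ttt Os in each gap as the original \ttt Os of $s$.

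I expect the main obstacle to be the bookkeeping in this last step: matching the gaps of $s$ with the gaps of the shifted expanded word, especially \ttt Os adjacent to an \ttt X or lying at the cyclic boundary where an induced pair wraps. If placing original \ttt Os between an expanded \ttt B and \ttt A turns out to be necessary, that would fail. It is not necessary, because an \ttt X in $s$ is a single letter, so no \ttt O lies inside it.
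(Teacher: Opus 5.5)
Your argument is correct, but it takes a different route from the paper's.

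The paper uses Theorem~\ref{ABX equalizable} as a black box. It equalizes the \texttt{O}-free word $s_r$ to a shift-compatible $s_r'$ with offset $d$. It then inserts $\ell_{\mathtt O}$ copies of \texttt{O} after every letter of $s_r'$, where $\ell_{\mathtt O}$ is the longest \texttt{O}-run of $s$. This dilation sends position $i$ to $i(\ell_{\mathtt O}+1)$, so shift-compatibility survives with offset $(\ell_{\mathtt O}+1)d$. Every gap now holds at least $\ell_{\mathtt O}$ copies of \texttt{O}, including the \texttt{B}--\texttt{A} gaps that Theorem~\ref{Inserting O} forces to be empty. After rotating $s$ to begin with a non-\texttt{O} letter, the original \texttt{O}s fit.

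You instead reopen Theorem~\ref{Inserting O} and make two changes:
\begin{enumerate}
\item You shrink the forbidden set to the induced inverted indices. This is legitimate because Steps~1 and~2 only touch $b_h$, $b_{i_k}$ and $b_{j_k-1}$. The column algorithm of Step~2 never uses that $h$ is non-inverted; it only needs $h$ itself not to be forbidden.
\item You replace $d_m$ by a large multiplier $N$, so that every admissible gap is already long.
\end{enumerate}

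The paper's route is more modular: it needs only the statement of the \texttt{ABX} result plus a one-line dilation observation. Yours requires re-verifying the linear-algebra steps. You correctly noticed this is necessary, since the unmodified theorem empties genuine \texttt{B}--\texttt{A} gaps that may hold original \texttt{O}s.

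In exchange, your route makes the dependence on $\ell_{\mathtt O}$ additive rather than multiplicative. Your offset is $(d_m+\ell_{\mathtt O})n_t$ instead of $(\ell_{\mathtt O}+1)\,d_m n_t$. Combined with the appendix's particular solution, this would lower the paper's $O(n^5)$ output-length bound to $O(n^4)$.
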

\begin{proof}
    If $s$ only contains the letter \ttt O, the theorem is trivial. Thus, assume that $s$ has at least one non-\ttt O letter. Because a cyclic shift does not affect the shift-equalizability, without loss of generality, $s$ starts with a non-\ttt O letter.
    
    Let $\ell_{\mtt O}$ be the length of the longest run of $\mtt O$ in $s$. Let $s_r$ be the \ttt{ABX}-sequence obtained by removing all \ttt O from $s$. By Theorem \ref{ABX equalizable}, we can insert \ttt O into $s_r$ to obtain $s_r'$ so that $s_r'$ is shift-compatible with offset $d$.

    Then, we obtain $s'$ by inserting $\ell_{\mtt O}$ copies of \ttt O after each character in $s_r'$ (including the end). That is, ${s_r'}\ind i = {s'} \ind{i \times (\ell_{\mtt O} + 1)}$. This construction implies that $s'$ is shift-compatible with offset $(\ell_{\mtt O}+1)\times d$: a step of size $d$ in $s_r'$ corresponds to a step of size $(\ell_{\mtt{O}}+1) \times d$ in $s'$.
    
    By the maximality of $\ell_{\mtt{O}}$, the original \ttt{O}s in $s$ can be accommodated within the inserted copies, so $s'$ is obtained from $s$ by inserting \ttt{O}s. Therefore, $s$ is \ttt O-shift-equalizable, as desired.
\end{proof}
 \begin{example}
We run the construction on the \ttt{ABXO}-sequence $s = \mtt{AOXBOAOOBX}$.

\smallskip
\noindent\textbf{Step 1 (remove every $\mtt O$).} Deleting all $\mtt O$ leaves
the \ttt{ABX}-sequence
\[
s_r = \mtt{AXBABX}.
\]

\smallskip
\noindent\textbf{Step 2 (equalize $s_r$).} By the previous example, inserting
$\mtt O$ into $s_r$ gives
\[
s_r' = \mtt{XAOXBABO},
\]
shift-compatible with offset $d = 3$.

\smallskip
\noindent\textbf{Step 3 (pad with $\mtt O$).} The longest $\mtt O$-run in $s$ is
$\mtt{OO}$, so $\ell_{\mtt O} = 2$. Insert two $\mtt O$s after each character of
$s_r'$:
\[
s' = \mtt{XOO\,AOO\,OOO\,XOO\,BOO\,AOO\,BOO\,OOO}
   = \mtt{XOOAOOOOOXOOBOOAOOBOOOOO},
\]
shift-compatible with offset $(\ell_{\mtt O}+1)\,d = 3 \times 3 = 9$. The original
$\mtt O$s of $s$ fit inside the inserted copies, so $s$ is \ttt O-shift-equalizable.
\end{example}

\section{Consequences}\label{stronger_bound}
The Main Theorem (Theorem~\ref{main_theorem}) can be restated as follows: any two binary words that are permutations of each other can be made cyclically equivalent by a single $0$-simultaneous insertion. The same statement holds for $1$-insertion, by complementing both words.

The construction is explicit, so it also bounds the length of the resulting words. By Remark~\ref{bound}, equalizing an \ttt{AB}-sequence yields output length $O(n^3 2^{n/2})$, which remains exponential through the \ttt{ABX} and \ttt{ABXO} stages. The sharper analysis of Appendix~\ref{stronger_bound_proof} removes the exponential factor: an \ttt{AB}-sequence has offset $d = O(n^3)$ and hence output length $O(n \cdot d) = O(n^4)$; expanding \ttt X preserves this degree, and the \ttt O-runs of length $O(n)$ then give output length $O(n^5)$ for an \ttt{ABXO}-sequence. We do not claim this bound is tight.

\section{Application to Card-Based Cryptography}
We return to the card model that motivated the theorem. In the standard two-colored setting~\cite{denboer,mizuki-shizuya}, a $\clubsuit$ encodes $0$ and a $\heartsuit$ encodes $1$, so a face-down sequence is a binary word, and a random cut applies a uniformly random cyclic shift. Two sequences are thus indistinguishable after a random cut if and only if they are cyclic shifts of each other.

Our result is a tool for protocol designers. Our theorem shows which card configurations a designer can make indistinguishable, and at what cost. Let $C_1$ and $C_2$ be two sequences of equal length.

\begin{itemize}
  \item If they contain the same number of $\heartsuit$ cards, they can always be made indistinguishable under a random cut, using a single type of helping card ($\clubsuit$).
  \item If they contain different numbers of $\heartsuit$ cards, no insertion can achieve this, regardless of how many cards or types are used.
\end{itemize}

The construction may insert many $\clubsuit$ cards, up to $O(n^5)$ for sequences of length $n$. Reducing this number
remains open.

\section{Open Problems}
We close with two questions left open by this work.

\subsection{Sharpness of the length bound}

Our construction equalizes two binary words of length $n$ while increasing their length to $O(n^5)$ (Appendix~\ref{stronger_bound_proof}). We do not know whether this polynomial bound is sharp.

\begin{problem}\label{prob:bound}
What is the exact, non-asymptotic length needed to equalize two binary words of length $n$? In particular, is there a family of words that requires $\Omega(n^3)$ inserted symbols, and can the $O(n^5)$ upper bound be lowered to meet such a lower bound?
\end{problem}

\subsection{Discarding a letter over larger alphabets}

Our binary result shows that one letter is never needed: the letter $1$ can be discarded. We ask whether one letter can always be discarded over larger alphabets.

The discarded letter cannot be fixed in advance. Over $\Sigma = \{0,1,2\}$, the words $012$ and $021$ are permutations of each other, but they are not $\{1,2\}$-equalizable: each word contains a single $0$, so the two $0$s must align, and the other letters can thus never be made to match.

\begin{problem}\label{prob:discard}
Let $\Sigma$ be an alphabet of size at least $3$. Let $u$ and $v$ be words over $\Sigma$ that are permutations of each other. Must there exist a letter $c \in \Sigma$ such that $u$ and $v$ are $(\Sigma \setminus \{c\})$-equalizable?

\end{problem}

\section*{Acknowledgements}
The author is grateful to Suthee Ruangwises for introducing him to this line of research and for his advice on publication venues.

\appendix
\section{Polynomial Length Bound}
\label{stronger_bound_proof}
We now give a sharper bound on the amount of insertion. Remark~\ref{bound} bounds
the output length of an \ttt{AB}-sequence by $O(n^3 2^{n/2})$, an exponential
quantity. Using linear algebra, we replace the exponential factor by a polynomial
one. The construction is exactly that of Theorem~\ref{Inserting O}; only the
particular solution of Step~1 is chosen differently.

\begin{theorem}
Let $s$ be an \ttt{AB}-sequence of length $n$ such that $0$ is a valley, let
$\cal R = (r_n(i_k, j_k))_{k=0}^{m-1}$ be a sequence of interleaving arcs for $s$
with $m = n/2$, and let $A \in \{0,1\}^{m \times n}$ be the matrix whose $k$-th
row indicates the arc $r_n(i_k, j_k)$, that is, $A_{k,\ell} = 1$ if and only if
$i_k \le \ell < j_k$ (the interval matrix of Theorem~\ref{Inserting O}). For any
$c = (c_k)_{k=0}^{m-1} \in \bb Z^{m}$, there exist $b \in \bb N_0^{\,n}$ and
$d \in \bb N_0$ such that
\[
A b + c = d\,\mathbf{1}_{m}, \qquad\text{and}\qquad
b_t = 0 \text{ for every inverted index } t,
\]
with $d \le n \sum_{k=0}^{m-1} |c_k|$.
\end{theorem}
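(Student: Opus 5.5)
The idea is to replace the exponentially large particular solution of Step~1 in Theorem~\ref{Inserting O} by a combination of small nonnegative ``building blocks,'' one per arc. As in that proof, only the coordinates $b_{i_k}$ and $b_{j_k-1}$ are ever made nonzero, and these are never inverted indices. Indeed, $s\ind{i_k}=\mtt A$, so $i_k$ is not inverted. Also $s\ind{j_k}=\mtt B\neq\mtt A$, so $j_k-1$ is not inverted. Hence the constraint $b_t=0$ holds automatically. It remains to solve $Ab = d\mathbf 1 - c$ with $b\ge 0$ and $d$ small.

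\textbf{Step 1: building blocks.} For each arc index $k$, I would construct two vectors $p^{(k)},q^{(k)}\in\bb N_0^n$, supported on start and end coordinates, satisfying
\[
A p^{(k)} = \alpha_k\mathbf 1 + e_k, \qquad A q^{(k)} = \beta_k \mathbf 1 - e_k,
\]
with $0\le\alpha_k,\beta_k\le n$. Here $e_k$ is the $k$-th unit vector of $\bb Z^m$.

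\textbf{Step 2: assembling the solution.} Given the blocks, set
\[
b=\sum_{k:\,c_k<0}|c_k|\,p^{(k)}+\sum_{k:\,c_k>0}c_k\,q^{(k)}.
\]
Then $Ab = d\mathbf 1 - c$ with $d=\sum_{c_k<0}|c_k|\alpha_k+\sum_{c_k>0}c_k\beta_k\le n\sum_k|c_k|$, and $b\ge 0$. So everything reduces to Step~1.

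\textbf{Step 1 in detail: a target-driven greedy.} I would build the blocks with a variant of Algorithm~2 that aims at a prescribed nonnegative target vector $T$ instead of $\mathbf 1$. Fix the distinguished arc $k$ and start with a seed on arc $k$. For $p^{(k)}$ the seed is $2$ at some coordinate of arc $k$; for $q^{(k)}$ it is an empty seed, with arc $k$ filled last.

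Then sweep the arcs $k-1,k-2,\dots,0$, raising $b_{i_l}$ to fill each deficit. Next sweep $k+1,\dots,m-1$, raising $b_{j_l-1}$. The interleaving property (increasing starts and increasing ends) guarantees two things. First, $i_l$ lies in no arc with index greater than $l$ that has already been processed on the left side. Second, $j_l-1$ lies in no arc with index less than $l$. So, exactly as in the correctness argument for Algorithm~2, a finished arc is never disturbed.

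The delicate point is that a deficit must never be negative. An arc may already be overfilled by mass placed on a nested or overlapping neighbour. I would handle this by taking the common level $\alpha_k$ (or $\beta_k$) as the maximum overflow that occurs during the sweep. Each sweep touches each coordinate at most once, and each increment is bounded by the current level. I would then aim to show by a potential argument that this maximum overflow is at most the number of coordinates, giving $\alpha_k,\beta_k\le n$.

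\textbf{Main obstacle and fallback.} I expect the bound $\alpha_k,\beta_k\le n$, and especially the construction of $q^{(k)}$ (which must make arc $k$ \emph{lighter} than all others), to be the main obstacle.

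If the greedy bookkeeping fails, I would fall back on total unimodularity. The matrix $A$ has the consecutive-ones property in each row, hence is totally unimodular, and deleting the inverted columns preserves this. For a fixed $d$, the polyhedron
\[
\{b\ge 0:\ Ab=d\mathbf 1-c,\ b_t=0 \text{ for every inverted index } t\}
\]
is therefore integral, so it suffices to show it is nonempty for some $d\le n\sum|c_k|$. I would do this via Farkas' lemma. Suppose $y$ is a certificate of infeasibility. Rescale it so that it is a vertex of the dual cone. By total unimodularity, such a $y$ has entries in $\{-1,0,1\}$ and $y^{\top}A\ge 0$ on the non-inverted columns. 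The remaining Farkas condition is $y^{\top}(d\mathbf 1 - c)<0$, which means $d\,y^{\top}\mathbf 1< y^{\top}c$. If $y^{\top}\mathbf 1\ge1$ for every such $y$, then this fails as soon as $d\ge\sum|c_k|$, since $y^{\top}c\le\sum_k|c_k|$. The existence of the positive homogeneous solutions $b^{(h)}$ from Step~2 of Theorem~\ref{Inserting O} should supply exactly this inequality.
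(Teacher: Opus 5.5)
\textbf{The main route has a genuine gap at Step~1.} Your assembly in Step~2 is correct, but Step~1 carries all the content, and the greedy you describe does not supply it. The bound $\alpha_k,\beta_k\le n$ via a ``potential argument'' is never given.

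Worse, the $p^{(k)}$ construction fails as specified. Neither sweep ever changes the sum on arc $k$: the left sweep writes at $i_l<i_k$, and the right sweep writes at $j_l-1\ge j_k$. So arc $k$ keeps the seed's value $2$. This forces $\alpha_k=1$, which in turn requires a seed coordinate lying in no other arc. For $s=\mtt{AAABBB}$ and $k=1$, every coordinate of $r_6(1,4)$ also lies in $r_6(0,3)$ or $r_6(2,5)$, so no seed works. ``Take the level to be the maximum overflow'' is not a repair either. Nonnegative increments cannot lower an overfilled arc, and raising the level afterwards leaves the arcs already processed short.

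The block you feared most is in fact the easy one. Take $q^{(k)}$ with an empty seed and level $\beta_k=1$. The two sweeps are then exactly those of Step~2 in the proof of Theorem~\ref{Inserting O}, minus the seed. The same argument applies: if two previously set entries lay in arc $l$, the later-processed of them would have found its own arc already nonempty. Hence every arc $l\neq k$ ends at exactly $1$, while arc $k$ stays at $0$, giving $Aq^{(k)}=\mathbf 1-e_k$ with no ``filling last'' needed. Then $p^{(k)}=\sum_{l\neq k}q^{(l)}$ satisfies $Ap^{(k)}=(m-2)\mathbf 1+e_k$ (for $m\ge 2$; $m=1$ is trivial).

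With this fix, your assembly yields $d\le n\sum_k|c_k|$ by a purely combinatorial, constructive argument. That is genuinely more elementary than the paper's route. The paper gets the particular solution $b'=-{A_R'}^{-1}c_p$, with $|b'_i|\le\sum_k|c_k|$, from total unimodularity of a square interval completion $A_R'$ of the non-inverted columns. It then adds $d_m\sum_h b^{(h)}$.

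\textbf{The Farkas fallback is a valid and genuinely different argument, but one step needs an extra ingredient.} The step you leave as ``should supply exactly this inequality'' does not follow from the homogeneous solutions alone. Let $\bar b=\sum_h b^{(h)}$, which is at least $1$ on every non-inverted coordinate and satisfies $A\bar b=n_t\mathbf 1$. This alone gives only $n_t\,y^{\top}\mathbf 1=(y^{\top}A_R)\bar b_R\ge 0$.

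Strictness needs $y^{\top}A_R\neq 0$ for $y\neq 0$, i.e.\ full row rank of $A_R$. This holds because the columns $j_k-1$ are non-inverted and form a unit lower-triangular $m\times m$ submatrix. Then $n_t\,y^{\top}\mathbf 1>0$, and integrality gives $y^{\top}\mathbf 1\ge 1$. This is not cosmetic. Without the rank fact, a nonzero $y$ with $y^{\top}A_R=0$ (hence $y^{\top}\mathbf 1=0$) and $y^{\top}c>0$ would certify infeasibility for every $d$.

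The same rank fact makes the cone $\{y: y^{\top}A_R\ge 0\}$ pointed. That is what lets you replace an arbitrary certificate by a $\{0,\pm1\}$-valued extreme ray; ``vertex of the dual cone'' should read ``extreme ray''.

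So repaired, the fallback is non-constructive: it relies on Hoffman--Kruskal integrality and needs an LP to actually produce $b$. In exchange it proves more than the statement. Every $d\ge\sum_k|c_k|$ is feasible, so the factor $n$ in the bound can be dropped.
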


\begin{proof}
Let $T \subseteq \bb Z_n$ be the set of inverted indices, and let $A_R$ be the
$m \times (n - |T|)$ matrix obtained by deleting from $A$ the columns indexed by
$T$. Then $A_R$ is again an interval matrix, since deleting columns leaves each
row a block of consecutive $1$s.

We show that $A_R$ still has full row rank. By Step~1 of the proof of
Theorem~\ref{Inserting O}, column $j_k - 1$ is the unique pivot of row $k$, and
the index $j_k - 1$ is not inverted: if it were, then $s\ind{j_k} = \mtt{A}$,
contradicting $s\ind{j_k} = \mtt{B}$. Hence no pivot column is deleted, so each
row of $A_R$ still has its own pivot, and the rows of $A_R$ remain linearly
independent.

Append to $A_R$ the standard basis row $e_\ell$ for each non-pivot column $\ell$,
forming a square invertible matrix $A_R' \in \{0,1\}^{(n - |T|) \times (n - |T|)}$.
Since $A_R'$ is still an interval matrix, it is totally
unimodular~\cite{schrijver1986}; consequently every entry of ${A_R'}^{-1}$ lies in
$\{0, +1, -1\}$.

Let $c_p$ be $c$ padded with zeros to length $n - |T|$, and put
$b'_p = -{A_R'}^{-1} c_p \in \bb Z^{n-|T|}$, so that $A_R' b'_p + c_p = 0$.
Because each entry of ${A_R'}^{-1}$ has magnitude at most $1$,
\[
|b'_{p,i}| = \Big| \sum_{j} ({A_R'}^{-1})_{ij}\, c_{p,j} \Big|
\le \sum_{j} |c_{p,j}| = \sum_{k=0}^{m-1} |c_k|.
\]
Discarding the appended rows of $A_R'$ and the padded zeros of $c_p$ leaves
$A_R b'_p + c = 0$. Inserting a zero into $b'_p$ at each inverted index gives
$b' \in \bb Z^n$ with $A b' + c = 0$ and $b'_t = 0$ for all inverted indices $t$.

Exactly as in Steps~2 and 3 of the proof of Theorem~\ref{Inserting O}, adding a
non-negative homogeneous solution makes all entries non-negative without altering
the values on $T$, producing $b^* \in \bb N_0^{\,n}$ with $b^*_t = 0$ for all
$t \in T$ and $A b^* + c = d_m n_t\,\mathbf{1}_m$, where $d_m = \max_i |b'_i|$ and
$n_t \le n$ is as in Step~3 of that proof. Setting $d^* = d_m n_t$, the
pair $(b^*, d^*)$ solves the system, and
\[
d^* = d_m n_t \le n \max_i |b'_i| \le n \sum_{k=0}^{m-1} |c_k|.
\]
\qed
\end{proof}

Taking $c_k = j_k - i_k$ as in Remark~\ref{bound} gives
$\sum_k |c_k| = \sum_k (j_k - i_k) = O(n^2)$, so the offset satisfies
$d = O(n^3)$, replacing the exponential offset of Remark~\ref{bound} by a
polynomial one. Since each gap holds at most $d$ copies of \ttt O, the output
length of an \ttt{AB}-sequence is $O(n \cdot d) = O(n^4)$; the \ttt O-runs of the
\ttt{ABXO} construction then multiply this by $O(n)$, giving output length
$O(n^5)$.

\bibliographystyle{plain}
\bibliography{references}

\end{document}